\documentclass[reqno,12pt]{article} 
\usepackage{url}
\usepackage{amsmath}
\usepackage{amssymb}
\usepackage{fancybox}

\usepackage[top=1in, bottom=1in, left=1in, right=1in]{geometry} 


\newtheorem{theorem}{Theorem}[section]

\newtheorem{lemma}{Lemma}[section]

\newcommand{\qed}{\hfill\rule{2.1mm}{2.1mm}}


\newcommand{\R}{\mathbf{R}}

\newcommand{\eref}[1]{$(\ref{#1})$}

\newcommand\e{{\bf e}}
\newcommand\g{{\bf g}}
\renewcommand\u{{\bf u}}
\renewcommand\v{{\bf v}}
\newcommand\x{{\bf x}}
\newcommand\y{{\bf y}}
\newcommand\bz{{\bf 0}}
\newcommand\eps{\epsilon}


\newcommand{\rank}{\mathrm{rank}\,}
\newcommand{\st}{\mathrm{s.t.}\;}


\newcommand{\ra}{\rightarrow}


\hyphenation{quadra-tic  cor-re-spond-ing}

\setlength{\parindent}{0.25in}


\title{Nuclear norm minimization for the planted clique and biclique
problems\thanks{Supported in part by a Discovery Grant from
NSERC (Natural Science and Engineering Research Council of Canada)}}
\author{Brendan P.W.~Ames\thanks{Department of
Combinatorics and Optimization, University of Waterloo, 200
University Avenue W., Waterloo, Ontario N2L 3G1, Canada,
bpames@math.uwaterloo.ca} \and
Stephen A.~Vavasis\thanks{Department of
Combinatorics and Optimization, University of Waterloo, 200
University Avenue W., Waterloo, Ontario N2L 3G1, Canada,
vavasis@math.uwaterloo.ca}}

\begin{document}
\maketitle

\begin{abstract}
We consider the problems of finding a maximum clique in a graph and finding
a maximum-edge biclique in a bipartite graph.  Both 
problems are NP-hard.  We
write both problems as matrix-rank minimization and then relax them using
the nuclear norm.  This technique,
which may be regarded as a generalization of compressive sensing,
has recently been shown to be
an effective way to solve rank optimization problems.
In the special cases that
the input graph has a planted clique or biclique (i.e., a single large clique
or biclique plus diversionary edges), our algorithm 
successfully provides 
an exact solution to the original instance. For each problem, we provide
two analyses of when our algorithm succeeds.
In the first analysis, the diversionary edges are placed
by an adversary.  In the second, they are placed at random.
In the case of random edges for the
planted clique problem, we obtain the same bound as Alon, Krivelevich and
Sudakov as well as Feige and Krauthgamer, but we use different techniques.
\end{abstract}

\section{Introduction}
\label{sec:intro}
Several recent papers including Recht et al.\ \cite{Recht-Fazel-Parrilo:2007} 
and Cand\`es and Recht \cite{Candes-Recht:2008} consider nuclear norm minimization as a
convex relaxation of matrix rank minimization.  {\em Matrix rank
minimization} refers to the problem of finding a matrix $X\in\R^{m\times n}$
to minimize $\rank(X)$ subject to linear constraints on $X$. 
As we shall show in
Sections~\ref{sec:maxclique} and \ref{sec:maxbiclique}, the clique
and biclique problems, both NP-hard,
are easily expressed as matrix rank minimization, thus showing that
matrix rank minimization
is also NP-hard.

Each of the two papers mentioned in the previous paragraph has results
of the following general form.  Suppose an
instance of  matrix rank minimization
is posed in which it is known {\em a priori} that a solution of very low
rank exists.  Suppose further that the constraints are random in some
sense.  Then the nuclear norm relaxation turns out to be exact, i.e., it
recovers the (unique) solution of low rank.  The {\em nuclear norm}
of a matrix $X$, also called the {\em trace norm}, is defined to be the
sum of the singular values of $X$.

These authors build upon recent breakthroughs in compressive sensing
\cite{Gilbert,Donoho,CandesRombergTao}.  In compressive sensing,
the problem is to recover a sparse vector that solves a set of linear
equations.  In the case that the equations are randomized and
a very sparse solution exists, compressive
sensing can be solved by relaxation to the $l_1$ norm.  The correspondence
between matrix rank minimization and compressive sensing is as follows:
matrix rank (number of nonzero singular values) corresponds to vector sparsity
(number of nonzero entries) and nuclear norm corresponds to $l_1$ norm.

Our results follow the spirit of Recht et al.\ but use different
technical approaches.  We establish results about two well known
graph theoretic problems, namely maximum clique and maximum-edge biclique.
The maximum clique problem takes as input an undirected graph and asks
for the largest clique (i.e., induced subgraph of nodes that are
completely interconnected).  This problem is one of Karp's original NP-hard
problems \cite{GJ}.  The maximum-edge biclique takes as input a
bipartite graph $(U,V,E)$ and asks for the subgraph that is a complete
bipartite graph $K_{m,n}$ that maximizes the product $mn$.  This problem
was shown to be NP-hard by Peeters \cite{Peeters}.

In
Sections~\ref{sec:maxclique} and \ref{sec:maxbiclique},
we relax these problems to convex optimization using the nuclear norm.
For each problem, we show that convex optimization can recover the
exact solution in two cases.  The first case,
described in Section~\ref{sec:maxcliqueadver},
is the adversarial case:
the $N$-node graph under consideration consists of a single $n$-node clique
plus a number of diversionary edges chosen by an adversary.  We show
that the algorithm can tolerate up to $O(n^2)$ diversionary edges provided
that no non-clique vertex is adjacent to more than $O(n)$ clique vertices.
We argue also that these two bounds, $O(n^2)$ and $O(n)$, are the best 
possible.
We show analogous results for the biclique problem
in Section~\ref{sec:maxbicliqueadver}.

Our second analysis,
described in Sections~\ref{sec:maxcliquerand}
and \ref{sec:maxbicliquerand},
supposes that the graph contains a single clique or
biclique, while the remaining nonclique edges are inserted independently
at random with fixed probability $p$.  This problem has been studied
by Alon et al.\ \cite{Alon} and by Feige and Krauthgamer \cite{Feige:2000}.
In the
case of clique, we
obtain the same result as they do, namely, that as long as the clique
has at least $O(N^{1/2})$ nodes, where $N$ is the number of nodes in $G$,
then our algorithm will find it.  Like Feige and Krauthgamer, our algorithm
also certifies that the maximum clique has been found due to
a uniqueness result for convex optimization, which we present
in Section~\ref{sec:maxcliqueopt}.  We believe that
our technique is more general than Feige and Krauthgamer; for example,
ours extends essentially without alteration to the biclique problem, whereas
Feige and Krauthgamer rely on some special properties of
the clique problem.  Furthemore, Feige and Krauthgamer use
more sophisticated probabilistic tools (martingales), whereas our
results use only Chernoff bounds
and classical theorems about the norms of random matrices.
The random matrix results needed for our main
theorems are presented in Section~\ref{sec:prelim}.

Our interest in the planted clique and biclique problems arises
from applications in data mining.  In data mining, one seeks a pattern
hidden in an apparently unstructured set of data.  A natural question
to ask is whether a data mining algorithm is able to find the hidden
pattern in the case that it is actually present but obscured by noise.
For example, in the realm of clustering, Ben-David \cite{BD} has shown
that if the data is actually clustered, then a clustering algorithm can
find the clusters.  The clique and biclique problems are both
simple model problems for data mining.  For example, Pardalos
\cite{Pardalos} reduces a data mining problem in epilepsy
prediction to a maximum clique problem.
Gillis and Glineur \cite{GillisGlineur} use the biclique problem
as a model problem for nonnegative matrix factorization and
finding features in images.

\section{Results on norms of random matrices}
\label{sec:prelim}
In this section we provide a few results concerning random matrices
with independently identically distributed (i.i.d.) entries of
mean 0.  In particular, the probability distribution $\Omega$ for
an entry $A_{ij}$ will be as follows:
$$
A_{ij}=\left\{
\begin{array}{ll}
1 & \mbox{with probability $p$,} \\
-p/(1-p) & \mbox{with probability $1-p$.}
\end{array}
\right.
$$
It is easy to check that the variance of $A_{ij}$ is $\sigma^2=p/(1-p)$.

We start by recalling a theorem of F\"uredi and Koml\'os 
\cite{Furedi-Komlos:1981}:
\begin{theorem}
    \label{Furedi-Komlos}
        For all integers $i,j$, $1\le j \le i \le n$, let $A_{ij}$ be distributed according to $\Omega$.
        Define symmetrically $A_{ij} = A_{ji}$ for all $i < j$.

        Then the random symmetric matrix $A = [A_{ij}]$ satisfies
        \[
            \| A \| \le 3 \sigma \sqrt{n}
    \]
    with probability at least to $1 - \exp(-c n^{1/6})$ for some 
$c > 0$ that depends on $\sigma$.
\end{theorem}

\noindent{\bf Remark 1.} In this theorem and for the rest of the
paper, $\Vert A\Vert$ denotes $\Vert A\Vert_2$, often called the
spectral norm.  It is equal to the maximum singular value of $A$ or
equivalently to the square root of the maximum eigenvalue of $A^TA$.

\noindent{\bf Remark 2.} The theorem is not stated exactly in this way
in \cite{Furedi-Komlos:1981}; the stated form of the theorem can be deduced by taking $k = (\sigma / K)^{1/3} n^{1/6}$ and $v = \sigma \sqrt{n}$ in
the inequality
$$
    P(\max |\lambda| > 2\sigma \sqrt{n} + v) < \sqrt{n} \exp(-kv /(2 \sqrt{n} + v))
$$
on p.~237.

\noindent{\bf Remark 3.} As mentioned above, the mean value of
entries of $A$ is 0.  This is crucial for the theorem; a distribution
with any other mean
value would lead to $\Vert A\Vert=O(n)$.

A similar theorem due to Geman \cite{Geman:1980}
is available for unsymmetric matrices.

\begin{theorem}
    \label{geman tail bound}
    Let $A$ be a $\lceil yn \rceil \times n$ matrix whose entries are chosen according to $\Omega$
    for fixed $y \in \R_+$.
    Then, with probability at least $1 - c_1 \exp(-c_2 n^{c_3})$
    where $c_1 > 0$, $c_2>0$, and $c_3 > 0$  depend on $p$ and $y$,
    $$
        \|A\| \le c_4 \sqrt{n}
    $$
    for some $c_4>0$ also depending on $p,y$.
\end{theorem}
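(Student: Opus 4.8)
The plan is to reduce the rectangular case to the symmetric case already handled by Theorem~\ref{Furedi-Komlos}, exploiting the elementary fact that the spectral norm of a matrix cannot increase when passing to a submatrix. Concretely, set $m = \lceil yn\rceil$ and $N = m + n$. I would build an $N \times N$ symmetric random matrix $C = [C_{ij}]$ by drawing its entries $C_{ij}$, $1\le j\le i\le N$, independently from $\Omega$ and reflecting them across the diagonal, and I would choose the labeling so that the top-right $m\times n$ block of $C$ is exactly the given matrix $A$. Since $A$ has i.i.d.\ $\Omega$ entries, this embedding is consistent: the block below the diagonal is then $A^T$, and the two diagonal blocks are filled with fresh independent $\Omega$ entries, so $C$ satisfies the hypotheses of Theorem~\ref{Furedi-Komlos} verbatim.

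The key inequality is $\|A\| \le \|C\|$. This holds because $A = C[I,J]$ for disjoint index sets $I,J$: for any unit vectors $u\in\R^m$, $v\in\R^n$ one pads them with zeros to unit vectors $\tilde u,\tilde v\in\R^N$ supported on $I$ and $J$ respectively, so that $u^T A v = \tilde u^T C\tilde v \le \|C\|$, and maximizing over $u,v$ gives the claim. This step is what lets me avoid any fresh spectral analysis of $A$ itself.

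Applying Theorem~\ref{Furedi-Komlos} to $C$ then yields $\|C\| \le 3\sigma\sqrt{N}$ with probability at least $1 - \exp(-c N^{1/6})$, where $\sigma^2 = p/(1-p)$ and $c$ depends on $\sigma$. Since $N = (1+y)n + O(1)$, I can absorb the factor $\sqrt{1+y}$ into the constant and write $3\sigma\sqrt{N} \le c_4\sqrt{n}$ for a suitable $c_4$ depending on $p$ and $y$, while likewise $N^{1/6}\ge c_2' n^{1/6}$, so the failure probability is at most $\exp(-c_2 n^{1/6})$. Combining with $\|A\|\le\|C\|$ gives the stated bound with $c_1 = 1$, $c_3 = 1/6$, and the remaining constants depending only on $p$ and $y$, as required.

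I expect the only genuine subtlety to be checking that the embedding is legitimate — that $A$ really can be realized as an off-diagonal block of a symmetric matrix all of whose entries are i.i.d.\ $\Omega$ — rather than any hard estimate; once the reduction is in place, everything reduces to the already-cited symmetric bound and bookkeeping of constants. A tempting alternative is the Hermitian dilation $\left[\begin{smallmatrix}0&A\\A^T&0\end{smallmatrix}\right]$, whose norm equals $\|A\|$ exactly, but that matrix has deterministic zero diagonal blocks and hence does not directly meet the i.i.d.\ hypothesis of Theorem~\ref{Furedi-Komlos}; the submatrix-monotonicity route sidesteps this issue entirely.
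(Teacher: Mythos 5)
Your proof is correct, but it is not the route the paper takes: the paper does not prove this theorem at all, instead citing Geman's 1980 result and remarking that the stated tail bound can be extracted from the moment estimates on pp.~255--256 of that paper. Your argument instead derives the rectangular case from the symmetric case (Theorem~\ref{Furedi-Komlos}) by coupling: embed $A$ as the strictly-off-diagonal $m\times n$ block of an $(m+n)\times(m+n)$ symmetric matrix $C$ with i.i.d.\ $\Omega$ entries, use submatrix monotonicity of the spectral norm to get $\|A\|\le\|C\|$, and then transfer the F\"uredi--Koml\'os bound. All the steps check out: the block $\{1,\dots,m\}\times\{m+1,\dots,m+n\}$ lies entirely above the diagonal, so the coupling is legitimate; $u^TAv=\tilde u^TC\tilde v\le\|C\|$ is the standard padding argument; and $N=(1+y)n+O(1)$ lets you absorb constants so that $c_3=1/6$ and $c_4$ is roughly $3\sigma\sqrt{1+y}$. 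Your closing observation that the Hermitian dilation $\bigl[\begin{smallmatrix}0&A\\A^T&0\end{smallmatrix}\bigr]$ would violate the i.i.d.\ hypothesis is apt and correctly motivates the submatrix route. The trade-off between the two approaches: citing Geman gives the sharp asymptotic constant $(1+\sqrt{y})\sigma$ for rectangular matrices, whereas your reduction gives a looser constant; but since the paper only ever needs $\|A\|=O(\sqrt{n})$ with exponentially small failure probability, your argument is fully adequate for every use of this theorem in the paper and has the advantage of making Section~2 self-contained modulo a single external random-matrix input rather than two.
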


As in the case of \cite{Furedi-Komlos:1981}, this theorem is not stated exactly this
way in Geman's paper, but can be deduced from the equations on
pp.~255--256 by taking $k=n^q$ for a $q$ satisfying $(2\alpha+4)q<1$.

The last theorem about random matrices requires a version of
the well known Chernoff bounds, which is as follows
    (see \cite[Theorem 4.4]{Mitzenmacher-Upfal:2005}).

\begin{theorem}[Chernoff Bounds]
    \label{CH ineq}
        Let $X_1, \dots, X_k$ be a sequence of $k$ independent Bernoulli trials, each succeeding with probability $p$
        so that $E(X_i) = p$.
        Let $S = \sum_{i=1}^k X_i$ be the binomially distributed variable describing the total number of successes.
        Then for $\delta > 0$
        \begin{equation}
            \label{Chernoff bound}
            P \Big(S > (1+\delta)pk \Big) \le \left(\frac{e^{\delta}}{(1+\delta)^{(1+\delta)}}\right)^{pk}.
        \end{equation}
It follows that for all $a \in (0, p \sqrt{k})$,
\begin{equation}
    \label{Chernoff bound sqrt root}
    P(|S - pk| > a \sqrt{k}) \le 2 \exp(-a^2/p).
\end{equation}
\end{theorem}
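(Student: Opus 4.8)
The plan is to split the two-sided event into its two tails by the union bound,
\begin{equation}
P\bigl(|S-pk|>a\sqrt{k}\bigr)\le P\bigl(S>pk+a\sqrt{k}\bigr)+P\bigl(S<pk-a\sqrt{k}\bigr),
\end{equation}
and to bound each of the two terms separately by $\exp(-a^2/p)$, so that their sum is at most $2\exp(-a^2/p)$ as claimed. In this scheme the factor $2$ appearing in the statement is exactly the price of the union bound, and the content of the theorem is the one-sided estimate $\exp(-a^2/p)$ for each tail.

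For the upper tail I would invoke \eref{Chernoff bound} with the choice $\delta=a/(p\sqrt{k})$, so that $(1+\delta)pk=pk+a\sqrt{k}$ is precisely the threshold in question. The hypothesis $a\in(0,p\sqrt{k})$ guarantees $\delta\in(0,1)$, which keeps us in the regime where the multiplicative bound is sharpest. Taking logarithms in \eref{Chernoff bound} gives
\begin{equation}
P\bigl(S>pk+a\sqrt{k}\bigr)\le\exp\Bigl(pk\bigl[\delta-(1+\delta)\ln(1+\delta)\bigr]\Bigr),
\end{equation}
and since $pk\delta^2=a^2/p$, reaching the target exponent reduces to the purely elementary inequality
\begin{equation}
(1+\delta)\ln(1+\delta)-\delta\ge\delta^2\qquad(0<\delta<1).
\end{equation}
I would attempt this by setting $h(\delta)=(1+\delta)\ln(1+\delta)-\delta-\delta^2$, noting $h(0)=0$, and analysing $h'$ and $h''$ to control the sign of $h$ on $(0,1)$.

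The lower tail I would reduce to the same machinery by passing to the complementary count $T=k-S$, which is $\mathrm{Binomial}(k,1-p)$ with mean $(1-p)k$. Then $P\bigl(S<pk-a\sqrt{k}\bigr)=P\bigl(T>(1-p)k+a\sqrt{k}\bigr)$, and \eref{Chernoff bound} applies verbatim with $p$ replaced by $1-p$ and $\delta$ replaced by $\delta'=a/((1-p)\sqrt{k})$, the mean $(1-p)k$ now supplying the normalising factor. This leads to a logarithmic inequality of the same shape as above. Alternatively one may derive a one-sided lower-tail multiplicative bound directly from the moment generating function, in the same spirit as the derivation of \eref{Chernoff bound}.

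The routine parts are the union bound, the substitution $\delta=a/(p\sqrt{k})$, and the passage to the complement for the lower tail. The step I expect to be the crux, and the main obstacle, is the elementary inequality that pins down the exact coefficient of $\delta^2$ in the exponent. A Taylor expansion gives $(1+\delta)\ln(1+\delta)-\delta=\delta^2/2-\delta^3/6+\cdots$, so the naive local comparison yields the coefficient $1/2$ (hence $\exp(-a^2/(2p))$) and the usual uniform simplification on $(0,1)$ yields $1/3$ (hence $\exp(-a^2/(3p))$). Driving the coefficient all the way to $1$, as the stated exponent $\exp(-a^2/p)$ demands, is therefore the delicate point; here I would expect to have to use the restriction $a<p\sqrt{k}$ (equivalently $\delta<1$) sharply, and possibly to retain the second, Kullback--Leibler--type term in the exact binomial rate function that the simplified form of \eref{Chernoff bound} discards, rather than rely on the simplified bound alone. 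Verifying carefully whether this sharpened inequality can be sustained across the whole range $\delta\in(0,1)$ is the heart of the argument.
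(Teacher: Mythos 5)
You have correctly located the crux, and the crux is fatal: the inequality $(1+\delta)\ln(1+\delta)-\delta\ge\delta^2$ holds for \emph{no} $\delta>0$. Setting $h(\delta)=(1+\delta)\ln(1+\delta)-\delta-\delta^2$ gives $h(0)=0$ and $h'(\delta)=\ln(1+\delta)-2\delta<0$, so $h<0$ on all of $(0,\infty)$; indeed $(1+\delta)\ln(1+\delta)-\delta\le\delta^2/2$ always, and the best uniform coefficient on $\delta\in(0,1]$ extractable from \eref{Chernoff bound} is $2\ln 2-1\approx 0.386$, customarily rounded down to $1/3$. Moreover, no amount of sharpening (retaining the full Kullback--Leibler rate, as you suggest) can rescue the exponent $-a^2/p$, because the statement \eref{Chernoff bound sqrt root} is false as printed: fix $p=0.1$ and $a=1$; by the central limit theorem, $P(|S-pk|>\sqrt{k})\to 2\Phi(-1/\sqrt{p(1-p)})\approx 8.7\times 10^{-4}$ as $k\to\infty$ (and $a<p\sqrt{k}$ holds for $k>100$), while the asserted bound is $2e^{-10}\approx 9.1\times 10^{-5}$. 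The true two-sided Gaussian rate is $a^2/(2p(1-p))$, which is smaller than $a^2/p$ whenever $p<1/2$. There is in fact no proof in the paper to compare against: \eref{Chernoff bound} is quoted from Mitzenmacher--Upfal and \eref{Chernoff bound sqrt root} is merely asserted with ``It follows,'' which is presumably how the constant slipped. The correct conclusion is $P(|S-pk|>a\sqrt{k})\le 2\exp(-a^2/(3p))$, and the error is harmless downstream: the sole use of \eref{Chernoff bound sqrt root}, in the proof of Theorem~\ref{thm:AtildeA}, needs only an exponent $-a^2c_p$ with $c_p>0$ depending on $p$; one tightens the choice there to $h<(1-p)^2(1-\sqrt{p})/(24p)$ and the constants retain their stated dependence on $p$.

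One further repair to your plan even for the corrected constant: the complement trick for the lower tail is too lossy. Applying \eref{Chernoff bound} to $T=k-S$, binomial with success probability $1-p$, and $\delta'=a/((1-p)\sqrt{k})$ produces a rate of order $(1-p)k(\delta')^2=a^2/(1-p)$, i.e., a bound like $\exp(-c\,a^2/(1-p))$, which for small $p$ is far weaker than any bound of the form $\exp(-c\,a^2/p)$; the Poisson-style bound \eref{Chernoff bound} discards the variance factor $p(1-p)$ and degrades badly when the success probability is near $1$. Your alternative suggestion is the right one: use the direct lower-tail Chernoff bound $P(S\le(1-\delta)pk)\le\exp(-pk\delta^2/2)$, valid for $0<\delta<1$ (Mitzenmacher--Upfal, Theorem 4.5), which with $\delta=a/(p\sqrt{k})$ gives $\exp(-a^2/(2p))$; combined via the union bound with the upper-tail estimate $\exp(-a^2/(3p))$, this proves the corrected statement, the hypothesis $a\in(0,p\sqrt{k})$ serving exactly to keep $\delta<1$ in both tails.
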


\noindent

The final theorem of this section is as follows.
\begin{theorem}
Let $A$ be an $n\times N$ matrix whose entries are chosen according to
$\Omega$. 
Let $\tilde A$ be defined as follows.  For $(i,j)$
such that $A_{ij}=1$, we define $\tilde A_{ij}=1$.  For entries $(i,j)$
such that $A_{ij}=-p/(1-p)$, we take $\tilde A_{ij} =-n_j/(n-n_j)$, where
$n_j$ is the number of $1$'s in column $j$ of $A$.   
Then there exist $c_1>0$ and $c_2\in(0,1)$ depending on $p$ such that
\begin{equation}
P(\Vert A-\tilde A\Vert_F^2 \le c_1N) \ge 1-(2/3)^{N}-Nc_2^{n}.
\label{eq:bigprob}
\end{equation}
\label{thm:AtildeA}
\end{theorem}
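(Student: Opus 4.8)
The plan is to reduce the Frobenius norm to a sum of independent per-column contributions, then control that sum by a truncation argument combined with an exponential (Chernoff) moment bound. The matrix $A-\tilde A$ vanishes in every position where $A_{ij}=1$, so its only nonzero entries lie in the $n-n_j$ negative positions of each column $j$, where $A_{ij}-\tilde A_{ij}=n_j/(n-n_j)-p/(1-p)$. A short computation gives the identity
\[
\frac{n_j}{n-n_j}-\frac{p}{1-p}=\frac{n_j-pn}{(n-n_j)(1-p)},
\]
so the contribution of column $j$ is $f_j:=(n-n_j)\big(n_j/(n-n_j)-p/(1-p)\big)^2=(n_j-pn)^2/\big((n-n_j)(1-p)^2\big)$, with the convention $f_j=0$ when $n_j=n$, and hence $\Vert A-\tilde A\Vert_F^2=\sum_{j=1}^N f_j$. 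Since the columns of $A$ are independent and each $n_j$ is $\mathrm{Bin}(n,p)$, the $f_j$ are i.i.d., and $E[f_j]=O(1)$ because $(n_j-pn)^2$ has expectation $np(1-p)$ while $n-n_j\approx(1-p)n$. Thus $E[\Vert A-\tilde A\Vert_F^2]=O(N)$, the correct order for the claimed bound $c_1N$.

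The two error terms $(2/3)^N$ and $Nc_2^n$ correspond to the two ways the sum can fail to be $O(N)$. The dangerous columns are those in which $n_j$ is close to $n$: when $n_j=n-1$, for instance, $f_j\approx n^2$, and one such column alone would ruin the bound. I would therefore fix $\delta\in(0,1-p)$ and call column $j$ \emph{bad} if $n_j\ge(1-\delta)n$; since $(1-\delta)n$ exceeds the mean $pn$, the Chernoff bound of Theorem~\ref{CH ineq} gives $P(n_j\ge(1-\delta)n)\le c_2^n$ for some $c_2=c_2(p,\delta)\in(0,1)$, and a union bound over the $N$ columns yields $P(\text{some column is bad})\le Nc_2^n$. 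This accounts for the second error term and, crucially, removes the heavy tail: on the complementary event every column has $n-n_j>\delta n$, so each surviving $f_j$ is at most $O(n)$ rather than $O(n^2)$.

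It then remains to bound $\sum_j g_j$, where $g_j:=f_j\,\mathbf{1}[n_j<(1-\delta)n]$, on the event that no column is bad (note $\sum_j f_j=\sum_j g_j$ there). I would apply the exponential Markov inequality: for $\theta>0$, independence of the columns gives
\[
P\Big(\sum_{j=1}^N g_j>c_1N\Big)\le\Big(e^{-\theta c_1}\,E[e^{\theta g_1}]\Big)^N.
\]
The goal is to choose a fixed $\theta>0$ (depending only on $p$ and $\delta$) for which $C:=E[e^{\theta g_1}]$ is bounded uniformly in $n$, and then set $c_1:=\theta^{-1}\ln(3C/2)$ so that the bracketed quantity equals $2/3$, producing the first error term $(2/3)^N$. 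Splitting $\{\sum_j f_j>c_1N\}$ into the bad-column event and its complement then proves the theorem.

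The main obstacle is verifying that $E[e^{\theta g_1}]$ stays bounded as $n\to\infty$ for a suitable fixed $\theta$. Writing it as $\sum_{k<(1-\delta)n}\binom{n}{k}p^k(1-p)^{n-k}\exp\!\big(\theta(k-pn)^2/((n-k)(1-p)^2)\big)$, one must show that the exponential growth of the Gaussian-type factor, which is at most $e^{\Theta(\theta n)}$ over the truncated range, is dominated by the relative-entropy decay $e^{-\Theta(n)}$ of the binomial weight away from $k=pn$. A second-order expansion of the relative entropy near the mean shows the two effects balance precisely when $\theta<(1-p)^2/(2p)$; choosing $\theta$ below this threshold (and small enough that the entropy term, which is bounded below by a positive constant for $k$ bounded away from $pn$, also dominates there) makes the summand's exponent nonpositive throughout the truncated range, so the sum concentrates near $k=pn$ and converges to a constant independent of $n$. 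This balancing of moment-generating-function growth against concentration—made possible only by the truncation—is the crux of the argument.
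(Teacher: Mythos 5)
Your proposal is correct and follows essentially the same route as the paper's proof: the same per-column decomposition $\Vert A-\tilde A\Vert_F^2=\sum_j (n_j-pn)^2/\bigl((1-p)^2(n-n_j)\bigr)$, the same truncation of columns with anomalously large $n_j$ via the Chernoff bound of Theorem~\ref{CH ineq} plus a union bound (yielding the $Nc_2^n$ term), and the same Bernstein exponential-moment argument exploiting independence of the columns to reduce everything to a uniform-in-$n$ bound on a single-column moment generating function (yielding the $(2/3)^N$ term). The only differences are cosmetic: you truncate at $(1-\delta)n$ rather than at $\min(\sqrt{p},2p)\,n$, and you justify the uniform MGF bound by a relative-entropy expansion of the binomial weights, where the paper instead sums over shells $|n_j-pn|\in[k\sqrt{n},(k+1)\sqrt{n})$ and applies \eref{Chernoff bound sqrt root} to each shell --- the identical balancing of the quadratic exponent against binomial concentration that you correctly identify as the crux.
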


\noindent{\bf Remark 1.} The notation $\Vert A\Vert_F$ denotes the Frobenius
norm of $A$, that is, $\left(\sum_i\sum_j A_{ij}^2\right)^{1/2}$.  
It is well known that $\Vert A \Vert_F\ge \Vert A\Vert$
for any $A$.

\noindent{\bf Remark 2.}
Note that $\tilde A$
is undefined if there is a $j$ such that $n_j=n$.  In this case we
assume that $\Vert A-\tilde A\Vert=\infty$, i.e., the event
considered in \eref{eq:bigprob} fails.

\noindent{\bf Remark 3.}  Observe that the column sums of $A$ are random
variables with mean zero since the mean of the entries is 0.  On the
other hand, the column sums of $\tilde A$ are identically zero
deterministically; this is the
rationale for the choice of  $\tilde A = -n_j/(n-n_j)$.

\begin{proof}
From the definition of $\tilde A$, for column $j$, 
there are exactly $n-n_j$ entries
of $\tilde A$ that differ from those of $A$.  Furthermore,
the difference of these
entries is exactly $(n_j-pn)/((1-p) (n-n_j))$.  Therefore, for each
$j=1,\ldots, N$, the contribution of column $j$ to the square norm
difference $\Vert A-\tilde A\Vert_F^2$ is given by 
$$\Vert A(:,j)- \tilde A(:,j)\Vert_F^2 = \frac{(n_j-pn)^2}{(1-p)^2(n-n_j)}.$$
Recall that the numbers $n_1,\ldots, n_N$ are independent, and each
is the result of $n$ Bernoulli trials done with probability $p$.

We now define $\Psi$ to be the event that at least one $n_j$ is very far
from the mean.  In particular, $\Psi$ is the event that there
exists a $j\in\{1,\ldots,N\}$ such
that $n_j>qn$, where $q=\min(\sqrt{p},2p)$.
Let $\tilde \Psi$ be its complement, and let $\tilde\psi(j)$ be
the indicator of this complement (i.e., $\tilde\psi(j)=1$ if
$n_j\le qn$  else $\tilde\psi(j)=0$).
Let $c$ be a positive scalar depending on $p$ to be determined later.
Observe that 
\begin{eqnarray}
P(\Vert A - \tilde A\Vert_F^2\ge cN) &=&
P(\Vert A - \tilde A\Vert_F^2\ge cN \>\wedge\> \tilde \Psi)
+P(\Vert A - \tilde A\Vert_F^2 \ge cN \>\wedge\> \Psi) \nonumber \\
&\le &
P(\Vert A - \tilde A\Vert_F^2\ge cN \>\wedge\> \tilde \Psi)
+P(\Psi). \label{eq:splitp}
\end{eqnarray}
We now analyze the two terms separately.  For the first term we
use a technique attributed to S.~Bernstein (see Hoeffding
\cite{Hoeffding}).  Let $\phi$ be the indicator function of nonnegative
reals, i.e., $\phi(x)=1$ for $x\ge 0$ while $\phi(x)=0$ for $x<0$.
Then, in general, $P(u\ge 0)\equiv E(\phi(u))$.  Thus,
\begin{eqnarray*}
P(\Vert A - \tilde A\Vert_F^2\ge cN \>\wedge\> \tilde \Psi)
& = &
P(\Vert A - \tilde A\Vert_F^2- cN\ge 0 \>\wedge\> \tilde\psi(n_1)=1
\>\wedge\>\cdots\>\wedge \>\tilde\psi(n_N)=1) \\
&=&
E(\phi(\Vert A - \tilde A\Vert_F^2- cN)\cdot \tilde\psi(n_1)
\cdots\tilde\psi(n_N)).
\end{eqnarray*}
Let $h$ be a positive scalar depending on $p$ to be determined
later.  Observe that for any such $h$ and for all $x\in \R$,
$\phi(x)\le \exp(hx)$.  Thus,
\begin{eqnarray}
P(\Vert A - \tilde A\Vert_F^2\ge cN \>\wedge\> \tilde \Psi)
&\le &
E(\exp(h\Vert A-\tilde A\Vert_F^2-hcN)
\cdot \tilde\psi(n_1)
\cdots\tilde\psi(n_N)) \nonumber \\
& = &
E\left(\exp\left(h\sum_{j=1}^N \left (\Vert A(:,j)-\tilde A(:,j)\Vert_F^2 - c\right)
\right)
\cdot \tilde\psi(n_1)
\cdots\tilde\psi(n_N)\right) \nonumber \\
&=&
E\left(\exp\left(h\sum_{j=1}^N \left( \frac{(n_j-pn)^2}{(1-p)^2(n-n_j)}-c
\right)
\right)
\cdot \tilde\psi(n_1)
\cdots\tilde\psi(n_N)\right)\nonumber \\
&=&
E\left(\prod_{j=1}^N\exp\left( h\left(\frac{(n_j-pn)^2}{(1-p)^2(n-n_j)}-c\right)\right)
\tilde\psi(n_j)\right)\nonumber \\
& = &
\prod_{j=1}^NE\left(\exp\left(h\left(\frac{(n_j-pn)^2}{(1-p)^2(n-n_j)}-c\right)
\right)\tilde\psi(n_j)\right) \label{eq:useindep} \\
&=& f_1\cdots f_N, \label{eq:prodf}
\end{eqnarray}
where
$$f_j =
E\left(\exp\left(h\left(\frac{(n_j-pn)^2}{(1-p)^2(n-n_j)}-c\right)
\right)\tilde\psi(n_j)\right).$$
To obtain \eref{eq:useindep}, we used the independence of the $n_j$'s.
Let us now analyze $f_j$ in isolation.  
\begin{eqnarray*}
f_j 
&=& 
\sum_{i=0}^n
\exp\left(h\left(\frac{(i-pn)^2}{(1-p)^2(n-i)}-c\right)
\right)\tilde\psi(n_j)P(n_j=i)
\\
&=& 
\sum_{i=0}^{\lfloor qn\rfloor}
\exp\left(h\left(\frac{(i-pn)^2}{(1-p)^2(n-i)}-c\right)
\right)P(n_j=i)
\\
&\le& 
\sum_{i=0}^{\lfloor qn\rfloor}
\exp\left(h\left(\frac{(i-pn)^2}{(1-p)^2(n-\sqrt{p}n)}-c\right)
\right)P(n_j=i).
\end{eqnarray*}
To derive the last line, we used the fact that $i\le\sqrt{p}n$ since
$i\le qn$.
Now let us reorganize this summation by considering first 
$i$ such that $|i-pn|< \sqrt{n}$, and next
$i$ such that $|i-pn|\in[\sqrt{n},2\sqrt{n})$, etc.
Notice that, since $i\le qn\le 2pn$, we need consider
intervals only until $|i-pn|$ reaches $pn$.
\begin{eqnarray*}
f_j & \le &
\sum_{k=0}^{\lfloor p\sqrt{n}\rfloor}\sum_{i:|i-pn|\in [k\sqrt{n},(k+1)\sqrt{n})}
\exp\left(h\left(\frac{(i-pn)^2}{(1-p)^2(n-\sqrt{p}n)}-c\right)
\right)P(n_j=i) 
\\
&\le &
\sum_{k=0}^{\lfloor p\sqrt{n}\rfloor}\sum_{i:|i-pn|\in [k\sqrt{n},(k+1)\sqrt{n})}
\exp\left(h\left(\frac{(k+1)^2n}{(1-p)^2(n-\sqrt{p}n)}-c\right)
\right)P(n_j=i)  \\
&=&
\sum_{k=0}^{\lfloor p\sqrt{n}\rfloor}\sum_{i:|i-pn|\in [k\sqrt{n},(k+1)\sqrt{n})}
\exp\left(h\left(\frac{(k+1)^2}{(1-p)^2(1-\sqrt{p})}-c\right)
\right)P(n_j=i)  \\
&=&
\sum_{k=0}^{\lfloor p\sqrt{n}\rfloor}
\exp\left(h\left(\frac{(k+1)^2}{(1-p)^2(1-\sqrt{p})}-c\right)
\right)\sum_{i:|i-pn|\in [k\sqrt{n},(k+1)\sqrt{n})} P(n_j=i)  \\
&\le &
2\sum_{k=0}^{\lfloor p\sqrt{n}\rfloor}
\exp\left(h\left(\frac{(k+1)^2}{(1-p)^2(1-\sqrt{p})}-c\right)
\right)\exp(-k^2/p),
\end{eqnarray*}
where, for the last line, we have applied \eref{Chernoff bound sqrt root}.
The theorem is valid since $k\le p\sqrt{n}$.

Continuing this derivation and overestimating the finite
sum with an infinite sum,
\begin{eqnarray*}
f_j &\le& 
2\exp(-hc)\cdot\sum_{k=0}^{\infty}
\exp\left(\frac{h(k+1)^2}{(1-p)^2(1-\sqrt{p})}-k^2/p\right) \\
&=&
2\exp\left(\frac{h}{(1-p)^2(1-\sqrt{p})}-hc \right) \\
& & \qquad\mbox{}
+
2\exp(-hc)\cdot\sum_{k=1}^{\infty}
\exp\left[\frac{h(k+1)^2}{(1-p)^2(1-\sqrt{p})}-k^2/p\right]
.
\end{eqnarray*}
Choose $h$ so that $h/((1-p)^2)(1-\sqrt{p})<1/(8p)$, i.e.,
$h<(1-p)^2(1-\sqrt{p})/(8p)$.  Then the second term in the square-bracket
expression
at least twice the first term for all $k\ge 1$, hence
\begin{equation}
f_j
\le 
2\exp\left(\frac{h}{(1-p)^2(1-\sqrt{p})}-hc\right) +
2\exp(-hc)\cdot\sum_{k=1}^{\infty}
\exp\left(-k^2/(2p)\right).
\label{eq:fjfinal}
\end{equation}
Observe that $\sum_{k=1}^\infty \exp(-k^2/(2p))$ is dominated by
a geometric series and hence is a finite number depending on $p$.
Thus, once $h$ is selected, it is possible to choose $c$ sufficiently large
so that each of the two terms in \eref{eq:fjfinal} is at most $1/3$.
Thus, with 
appropriate choices of $h$ and $c$, we conclude that $f_j\le 2/3$.
Thus, substituting this into \eref{eq:prodf} shows that
\begin{equation}
P(\Vert A - \tilde A\Vert_F^2\ge cN \>\wedge\> \tilde \Psi)\le (2/3)^{N}.
\label{eq:p1stterm}
\end{equation}
We now turn to the second term in \eref{eq:splitp}.  For a particular
$j$, the probability that $n_j> qn$ is bounded using \eref{Chernoff bound}
by $v_p^{n}$ where $v_p=(e^\delta/(1+\delta)^{(1+\delta)})^p$, where
$\delta=q/p-1$, i.e., $\delta=\min(p,\sqrt{p}-p)$.  Then the union bound
asserts that the probability that any $j$ satisfies $n_j>qn$ is at most
$Nv_p^n$.  Thus,
$$P(\Vert A-\tilde A\Vert_F^2\ge cN)\le (2/3)^N+Nv_p^n.$$
This concludes the proof.
\end{proof}

\section{Maximum Clique}
\label{sec:maxclique}

Let $G = (V,E)$ be a simple graph.
The \textbf{maximum clique problem} focuses on finding the largest clique of graph $G$, i.e., the largest complete subgraph of $G$.
For any clique $K$ of $G$, the adjacency matrix of the graph $K'$ obtained by taking the union of $K$ and the set of loops for each $v \in V(K)$ is a rank-one matrix
with 1's in the entries indexed by $V(K)\times V(K)$ and 0's everywhere else.
Therefore, a clique $K$ of $G$ containing $n$ vertices can be found by solving the rank minimization problem
\begin{align}
    \min \;  & \rank(X)  \notag \\
    \st  \;  & \sum_{i \in V} \sum_{j \in V} X_{ij} \ge n^2, \label{clique sum} \\
             & X_{ij} = 0 \;\; \mbox{if } (i,j) \notin E \mbox{ and } i \neq j,  \label{clique missing edges} \\
             & X \in [0, 1]^{V \times V}	\label{clique binomial}.
\end{align}
Unfortunately, this rank minimization problem is also NP-hard. 
We consider the relaxation obtained by replacing the objective function with the nuclear norm, the sum of the singular values of the matrix:
$
	\|X\|_* = \sigma_1(X) +\cdots + \sigma_N(X).
$

Underestimating $\rank(X)$ with $\|X\|_*$, we obtain the following convex optimization problem:
\begin{align}
	\begin{array}{rl}
    		\min 	& \|X\|_*  \\
    		\st  	& \sum_{i \in V} \sum_{j \in V} X_{ij} \ge n^2, \\
            		& X_{ij} = 0 \;\; \mbox{if } (i,j) \notin E \mbox{ and } i \neq j.  
	\end{array}            
            \label{clique relaxation}
\end{align}
Notice that the relaxation has dropped the constraint 
$X_{ij}\le 1$ that was present in the original formulation.  This constraint
turns out to be superfluous (and, in fact, unhelpful---see the remark
following \eref{clique KKT}) for our approach.
Using the Karush-Kuhn-Tucker conditions, we derive conditions for which the adjacency matrix of a graph comprising a clique of $G$ of size $n$
together with $n$ loops for each vertex in the clique is optimal for this convex relaxation.

\subsection{Optimality Conditions}
\label{sec:maxcliqueopt}

In this section, we prove a theorem that gives sufficient conditions
for optimality and uniqueness of a solution to \eref{clique relaxation}.
These conditions involve multipliers
$\lambda_{ij}$ and $\mu$ and a matrix $W$.  In subsequent 
subsections we explain
how to select
$\lambda_{ij}$, $\mu$ and $W$ based on the underlying graph to
satisfy the conditions.

Recall that if $f:\R^n\rightarrow\R$ is
a convex function, then a {\em subgradient} of $f$ at a point $\x$
is defined to be a vector $\g\in\R^n$ such that for all $\y\in\R^n$,
$f(\y)-f(\x)\ge \g^T(\y-\x)$.  It is a well-known theorem that for a convex
$f$ and for every $\x\in\R^n$, 
the set of subgradients forms a nonempty closed convex
set.  This set of subgradients, called the {\em subdifferential},
is denoted as $\partial f(\x)$.

In this section we consider the following generalization of 
\eref{clique relaxation} because it will also arise in our discussion
of biclique below:
\begin{equation}
\begin{array}{rl}
\min & \Vert X\Vert_* \\
\mbox{s.t.} &  \sum_{i=1}^M\sum_{j=1}^N X_{i,j}\ge mn, \\
& X_{i,j}=0 \mbox{ for $(i,j)\in \tilde E$}
\end{array}
\label{eq:relax1}
\end{equation}
Here, $X\in\R^{M\times N}$,
$E$ is a subset of $\{1,\ldots,M\}\times\{1,\ldots, N\}$, and
the complement of $E$ is denoted $\tilde E$.

The following lemma characterizes the subdifferential of $\|\cdot \|_*$ (see \cite[Equation 3.4]{Candes-Recht:2008} and also \cite{Watson:1992}).

\begin{lemma}
    \label{lemma: subdifferential of nuclear norm}
    Suppose $A \in \R^{m\times n}$ has rank $r$ with
    singular value decomposition $A = \sum_{k=1}^r \sigma_k \u_k \v_k^T$.
    Then
    $\phi$ is a subgradient of $\|\cdot\|_*$ at $A$ if and only if $\phi$ is of the form
    \[
        \phi = \sum_{k=1}^r \u_k \v_k^T + W
    \]
    where $W$ satisfies $\|W\| \le 1$ such that the column space of $W$ is orthogonal to $\u_k$ and the row space of $W$
            is orthogonal to $\v_k$ for all $k=1,2,\ldots,r$.
\end{lemma}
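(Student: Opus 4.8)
The plan is to reduce the statement to the standard fact that the subdifferential of a norm is governed by its dual norm, and then to carry out an explicit rigidity argument using the singular value decomposition. The key external ingredient is that the nuclear norm $\Vert\cdot\Vert_*$ and the spectral norm $\Vert\cdot\Vert$ are dual to one another under the trace inner product $\langle X,Y\rangle=\tr(X^TY)$; that is, $\Vert A\Vert_*=\max\{\langle A,Y\rangle : \Vert Y\Vert\le 1\}$. I would either cite this or derive it in one line from the SVD of $A$ together with von Neumann's trace inequality.

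First I would establish the general subdifferential formula for a norm: $\phi\in\partial\Vert A\Vert_*$ if and only if $\Vert\phi\Vert\le 1$ and $\langle\phi,A\rangle=\Vert A\Vert_*$. For the forward direction, testing the subgradient inequality $\Vert Y\Vert_*-\Vert A\Vert_*\ge\langle\phi,Y-A\rangle$ at $Y=\bz$ and $Y=2A$ forces $\langle\phi,A\rangle=\Vert A\Vert_*$; rewriting the inequality then gives $\langle\phi,Y\rangle\le\Vert Y\Vert_*$ for all $Y$, which by the duality above is exactly $\Vert\phi\Vert\le 1$. The converse is immediate, since $\Vert\phi\Vert\le 1$ and $\langle\phi,A\rangle=\Vert A\Vert_*$ give $\langle\phi,Y-A\rangle=\langle\phi,Y\rangle-\Vert A\Vert_*\le\Vert Y\Vert_*-\Vert A\Vert_*$ for every $Y$.

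Next I would expand the equality condition through the SVD. Writing $A=\sum_{k=1}^r\sigma_k\u_k\v_k^T$ gives $\langle\phi,A\rangle=\sum_{k=1}^r\sigma_k\,\u_k^T\phi\v_k$, and $\Vert\phi\Vert\le 1$ forces each $\u_k^T\phi\v_k\le 1$; hence $\langle\phi,A\rangle=\sum_k\sigma_k$ holds if and only if $\u_k^T\phi\v_k=1$ for every $k$. The crux is then a rigidity step: complete $\u_1,\dots,\u_r$ and $\v_1,\dots,\v_r$ to orthonormal bases $U$ of $\R^m$ and $V$ of $\R^n$, and pass to $\Phi=U^T\phi V$, which still satisfies $\Vert\Phi\Vert\le 1$ and now has $\Phi_{kk}=1$ for $k\le r$. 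Since $\Vert\Phi e_k\Vert\le 1$, $\Vert e_k\Vert=1$, and $e_k^T\Phi e_k=1$, the equality case of Cauchy--Schwarz yields $\Phi e_k=e_k$; applying the same argument to $\Phi^T$ gives $\Phi^T e_k=e_k$. Thus the first $r$ columns and first $r$ rows of $\Phi$ are standard basis vectors, so $\Phi$ is block diagonal with an $I_r$ block and a residual block $W'$ with $\Vert W'\Vert\le 1$. Transforming back produces $\phi=\sum_{k=1}^r\u_k\v_k^T+W$, where $W$ is supported on the complementary subspaces — column space orthogonal to every $\u_k$ and row space orthogonal to every $\v_k$ — and $\Vert W\Vert=\Vert W'\Vert\le 1$. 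The converse inclusion is a direct check: any $\phi$ of this form is block diagonal in the bases $U,V$, hence has spectral norm $\max(1,\Vert W\Vert)=1$ and attains $\langle\phi,A\rangle=\Vert A\Vert_*$.

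I expect the main obstacle to be the rigidity step — converting the scalar equalities $\u_k^T\phi\v_k=1$ into the global orthogonality constraints on $W$. The Cauchy--Schwarz equality argument handles each index $k$ individually, but one must check that the off-diagonal coupling blocks of $\Phi$ vanish \emph{simultaneously} for all $k\le r$, so that the $I_r$ block genuinely decouples from $W'$; this simultaneous decoupling is precisely what upgrades the inherited bound $\Vert\Phi\Vert\le 1$ into the claim $\Vert W\Vert\le 1$. Everything else is bookkeeping with the orthogonal invariance of the spectral norm and the trace inner product.
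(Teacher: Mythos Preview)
Your argument is correct and self-contained. However, the paper does not actually prove this lemma: it is quoted as a known result with references to Cand\`es--Recht \cite{Candes-Recht:2008} and Watson \cite{Watson:1992}, and no proof (or even sketch) is supplied in the paper itself. So there is nothing to compare against beyond noting that your derivation --- subdifferential of a norm via its dual, followed by the Cauchy--Schwarz rigidity step in the SVD basis --- is exactly the standard route those references take, and it goes through cleanly as you describe.
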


Let $I$ be a subset of $\{1,\ldots,N\}$.  We say that $\u\in\R^N$ is the
{\em characteristic vector} of $I$ if $u_i=1$ for $i\in I$ while
$u_i=0$ for $i\in\{1,\ldots,N\}-I$.  

Let $U^*$ be a subset of $\{1,\ldots,M\}$ and $V^*$ a subset of
$\{1,\ldots,N\}$, and let $\bar\u$, $\bar\v$ be their characteristic
vectors respectively.  Suppose $|U^*|=m$ and $|V^*|=n$ with $m>0$, $n>0$.
Let $X^*=\bar \u\bar\v^T$, an $M\times N$ matrix.
Clearly $X^*$ has rank 1.  Note that
Lemma~\ref{lemma: subdifferential of nuclear norm} implies that
\begin{equation}
    \partial \|\cdot\|_*(X^*) = \{ {\bar{\u}\bar{\v}^T}/{\sqrt{mn}} + W: W \bar{\v} = \bz, \; W^T \bar{\u} =\bz, \; \|W\| \le 1\}.
\label{eq:rank1subgrad}
\end{equation}

This leads to the main theorem for this section.

\begin{theorem}
Let $U^*$ be a subset of $\{1,\ldots, M\}$ of cardinality $m$, and let
$V^*$ be a subset of $\{1,\ldots,N\}$ of cardinality
$n$.  Let $\bar\u$ and $\bar\v$ be the characteristic vectors of $U^*$, $V^*$
respectively.
Let $X^*=\bar\u\bar\v^T$.  Suppose $X^*$ is feasible for
\eref{eq:relax1}.  Suppose also that there
exist $W\in\R^{M\times N}$, $\lambda \in\R^{M\times N}$ and
$\mu\in\R_+$ such that
$W \bar{\v} = \bz $, $\bar{\u}^T W =\bz$, $\|W \| \le 1$ and
\begin{align}
    \label{clique KKT0}
    \frac{\bar{\u}\bar{\v}^T}{\sqrt{mn}} + W &= \mu \e \e^T
                                        + \sum_{\substack{(i,j) \in \tilde E }} \lambda_{ij} \e_i \e_j^T.
\end{align}
Here, $\e$ denotes the vector of all $1$'s while $\e_i$ denotes
the $i$th column of the identity matrix  (either in $\R^M$ or
$\R^N$).
Then $X^*$ is an optimal solution to \eref{eq:relax1}.  Moreover,
for any $I\subset\{1,\ldots,M\}$ and $J\subset\{1,\ldots,N\}$
such that $I\times J\subset E$, $|I|\cdot|J|
\le mn$.  

Furthermore, if $\Vert W\Vert<1$
and $\mu > 0$,
then $X^*$ is the unique optimizer of \eref{eq:relax1} (and hence
will be found if a solver is applied to \eref{eq:relax1}).
\label{thm:optimality1}
\end{theorem}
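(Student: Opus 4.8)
The plan is to treat \eref{eq:relax1} as a convex program and to read \eref{clique KKT0} as its stationarity (Karush--Kuhn--Tucker) condition. Write $\phi := \mu\e\e^T + \sum_{(i,j)\in\tilde E}\lambda_{ij}\e_i\e_j^T$ for the matrix on the right of \eref{clique KKT0}. The hypotheses $W\bar\v=\bz$, $\bar\u^T W=\bz$ and $\Vert W\Vert\le 1$ say, via \eref{eq:rank1subgrad}, exactly that $\phi\in\partial\Vert\cdot\Vert_*(X^*)$; in particular $\langle\phi,X^*\rangle=\Vert X^*\Vert_*=\sqrt{mn}$ and $\Vert\phi\Vert\le 1$. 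Everything then flows from the subgradient inequality $\Vert X\Vert_*-\Vert X^*\Vert_*\ge\langle\phi,X-X^*\rangle$ combined with the constraints of \eref{eq:relax1}.

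For optimality I would evaluate $\langle\phi,X\rangle$ at a feasible $X$. Since $\langle\phi,X\rangle=\mu\sum_{i,j}X_{ij}+\sum_{(i,j)\in\tilde E}\lambda_{ij}X_{ij}$, and feasibility forces $X_{ij}=0$ on $\tilde E$ and $\sum_{i,j}X_{ij}\ge mn$, while $\mu\ge 0$, we get $\langle\phi,X\rangle\ge\mu mn$. The same evaluation at $X^*$ is an equality because $X^*$ vanishes on $\tilde E$ and saturates the sum constraint ($\sum_{i,j}X^*_{ij}=mn$), so $\langle\phi,X-X^*\rangle\ge 0$ and the subgradient inequality gives $\Vert X\Vert_*\ge\Vert X^*\Vert_*$. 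For the cardinality bound, given $I\times J\subset E$ with $p=|I|$, $q=|J|$ and characteristic vectors $\bar\u_I,\bar\v_J$, the scaled rank-one matrix $Y=(mn/(pq))\,\bar\u_I\bar\v_J^T$ is supported inside $E$ and has $\sum_{i,j}Y_{ij}=mn$, hence is feasible, with $\Vert Y\Vert_*=mn/\sqrt{pq}$. Optimality then yields $\sqrt{mn}=\Vert X^*\Vert_*\le\Vert Y\Vert_*=mn/\sqrt{pq}$, i.e.\ $pq\le mn$.

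For uniqueness under $\Vert W\Vert<1$ and $\mu>0$, the crucial structural point is that $\phi=\bar\u\bar\v^T/\sqrt{mn}+W$ has spectral norm exactly $1$ and that this largest singular value is \emph{simple}: since the column and row spaces of $W$ avoid $\bar\u$ and $\bar\v$ and $\Vert W\Vert<1$ strictly, the value $1$ is attained only by the singular pair $(\bar\u/\sqrt m,\bar\v/\sqrt n)$. Suppose $X$ is feasible with $\Vert X\Vert_*=\Vert X^*\Vert_*$. Tracing the equality case of the chain $\Vert X\Vert_*\ge\langle\phi,X\rangle=\mu\sum_{i,j}X_{ij}\ge\mu mn=\Vert X^*\Vert_*$ (the first step by duality of the spectral and nuclear norms, the last equality because $\langle\phi,X^*\rangle=\Vert X^*\Vert_*$) forces, using $\mu>0$, both $\sum_{i,j}X_{ij}=mn$ and tightness of the duality bound $\langle\phi,X\rangle=\Vert X\Vert_*$. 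Expanding the latter along the singular value decomposition $X=\sum_k\tau_k\x_k\y_k^T$ and invoking von Neumann's trace inequality shows every singular pair with $\tau_k>0$ must satisfy $\x_k^T\phi\y_k=1$; by simplicity of the top singular value of $\phi$ this forces $\x_k=\bar\u/\sqrt m$ and $\y_k=\bar\v/\sqrt n$, so $X$ is rank one of the form $\tau\bar\u\bar\v^T/\sqrt{mn}$, and the saturated sum constraint pins $\tau=\sqrt{mn}$, giving $X=X^*$.

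I expect the main obstacle to be making the last step rigorous, namely the equality analysis of von Neumann's trace inequality together with the simplicity claim. Concretely, for unit vectors $\x,\y$ one writes $\x=\alpha\bar\u/\sqrt m+\x^\perp$ and $\y=\beta\bar\v/\sqrt n+\y^\perp$ with $\x^\perp\perp\bar\u$, $\y^\perp\perp\bar\v$; using $W\bar\v=\bz$ and $W^T\bar\u=\bz$ this collapses $\x^T\phi\y$ to $\alpha\beta+(\x^\perp)^T W\y^\perp$, and the elementary estimate $\sqrt{(1-\Vert\x^\perp\Vert^2)(1-\Vert\y^\perp\Vert^2)}+\Vert W\Vert\,\Vert\x^\perp\Vert\,\Vert\y^\perp\Vert<1$ whenever $(\x^\perp,\y^\perp)\ne(\bz,\bz)$ --- valid precisely because $\Vert W\Vert<1$ --- supplies the forced alignment and thereby the whole uniqueness conclusion.
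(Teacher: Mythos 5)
Your proposal is correct, and the optimality and cardinality-bound parts coincide with the paper's argument (subgradient inequality plus the KKT identity to reduce $\langle\phi,X-X^*\rangle$ to $\mu(\sum X_{ij}-mn)\ge 0$, and the rescaled rank-one competitor $Y$ for the bound $|I|\cdot|J|\le mn$). Where you genuinely diverge is the uniqueness proof. The paper decomposes $X-X^*$ into five mutually orthogonal subspaces $S_1,\dots,S_5$ of $\R^{M\times N}$ and runs a case analysis: if the $S_1$-component is nonzero it perturbs the certificate to $W+\eps Z_1$; if the $S_2$ or $S_3$ component is nonzero it invokes the already-proved maximality of $U^*\times V^*$ to force those components to vanish; and in the residual case it compares $\sigma_1(X)\ge\sigma_1(X^*)$ with $\Vert X\Vert_F>\Vert X^*\Vert_F$ to conclude $\sigma_1(X)+\sigma_2(X)>\sigma_1(X^*)$. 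You instead exploit the equality case of the nuclear/spectral duality pairing: since $\Vert\phi\Vert=1$ with the top singular value simple and attained only at $(\bar\u/\sqrt m,\bar\v/\sqrt n)$ (your computation $\x^T\phi\y=\alpha\beta+(\x^\perp)^TW\y^\perp$ together with the strict estimate $\sqrt{(1-s^2)(1-t^2)}+wst<1$ for $(s,t)\ne(\bz,\bz)$, $w<1$, is correct), any feasible $X$ achieving $\Vert X\Vert_*=\Vert X^*\Vert_*$ must be a positive multiple of $\bar\u\bar\v^T$, and the saturated sum constraint fixes the multiple. Your route is shorter and more self-contained --- it needs neither the five-subspace decomposition nor the intermediate maximality fact --- at the cost of relying on the spectral structure of the dual certificate rather than on elementary orthogonality; the paper's decomposition, on the other hand, isolates exactly which perturbation directions are controlled by which hypothesis ($\Vert W\Vert<1$ versus $\mu>0$ versus feasibility), which is informative when one wants to weaken the assumptions. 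Both arguments are complete and yield the same conclusion.
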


\begin{proof}
The fact that $X^*$ is optimal is a straightforward application
of the well-known KKT conditions. Nonetheless,
we now explicitly prove optimality
because the inequalities in the proof are useful for the
uniqueness proof below.

Suppose $X$ is another matrix feasible for \eref{eq:relax1}.  
We wish to show that $\Vert X\Vert_*\ge \Vert X^*\Vert_*$.  To prove
this, we use the definition of subgradient followed by 
\eref{clique KKT0}.  The notation
$A\bullet B$ is used
to denote the elementwise inner product of two matrices $A,B$.
\begin{eqnarray}
\Vert X\Vert_*-\Vert X^*\Vert_* &\ge &
(\bar{\u}\bar{\v}^T/\sqrt{mn} + W)\bullet (X-X^*)  \label{eq:subgrad}\\
&=& \mu (\e \e^T)\bullet (X-X^*)
     + \sum_{\substack{(i,j) \in \tilde E }} \lambda_{ij} 
          (\e_i \e_j^T)\bullet (X-X^*) \label{eq:kkt1} \\
&=&\mu \left((\e\e^T)\bullet X -mn \right) \label{eq:kkt1simp} \\
&\ge& 0. \label{eq:gez}
\end{eqnarray}
Equation \eref{eq:subgrad} follows by the definition of subgradient and
\eref{eq:rank1subgrad}; \eref{eq:kkt1} follows from \eref{clique KKT0};
and \eref{eq:kkt1simp} follows from the fact that $(\e\e^T)\bullet X^*=mn$
by definition of $X^*$ and 
$(\e_i\e_j^T)\bullet X=(\e_i\e_j^T)\bullet X^*=0$ for $(i,j)\in \tilde E$
by feasibility. Finally, \eref{eq:gez} follows since $\mu\ge 0$
and $(\e\e^T)\bullet X\ge mn$ by feasibility.
This proves that $X^*$ is an optimal solution to \eref{eq:relax1}.

Now consider $(I,J)$ such that $I\times J\subset E$.  Then
$X'=\bar\u'(\bar \v')^T\cdot mn/(|I|\cdot|J|)$, 
where $\bar\u'$ is the characteristic vector of $I$ and
$\bar\v'$ is the characteristic vector of $J$, is also a feasible
solution to \eref{eq:relax1}.   Recall that for a matrix of the form $\u\v^T$, the
unique nonzero singular value (and hence the nuclear norm) equals
$\Vert\u\Vert\cdot\Vert\v\Vert$.  Thus, $\Vert X'\Vert_*=mn/(|I|\cdot|J|)^{1/2}$
and $\Vert X^*\Vert_*=\sqrt{mn}$.  
Since $X^*$ is optimal, $\Vert X'\Vert_*\ge
\Vert X^*\Vert$, i.e., $\sqrt{mn}\le mn/(|I|\cdot|J|)^{1/2}$.  Simplifying
yields $|I|\cdot|J|\le mn$.

Now finally we turn to the uniqueness of $X^*$, which is the
most complicated part of the proof.  This argument requires
a preliminary claim.  Let $S_1$ denote the subspace of $M\times N$ matrices
$Z_1$ such that $\bar\u^TZ_1=\bz$ and $Z_1\bar\v=\bz$.  Let $S_2$ denote the
subspace of $M\times N$ matrices that can be written in the form
$\x\bar \v^T$, where $\x\in\R^M$ has all zeros in positions indexed by
$U^*$.  Let $S_3$ denote the subspace of $M\times N$ matrices that can
be written in the form $\bar \u\y^T$, where $\y\in\R^N$ has all zeros in
positions indexed by $V^*$.  Let $S_4$ denote the subspace of all 
$M\times N$ matrices that can be written in the form $\bar\u\y^T+\x\bar\v^T$,
where $\x$ has nonzeros only in positions indexed by $U^*$, $\y$ has
nonzeros only in positions indexed by $V^*$, and 
the sum of entries of $\bar\u\y^T+\x\bar\v^T$ is zero.
Finally, let $S_5$ be the subspace of $M\times N$ matrices of the form
$\alpha\bar\u\bar\v^T$, where $\alpha$ is a scalar.

The preliminary claim is that $S_1,\ldots,S_5$ are mutually orthogonal and that
$S_1\oplus\cdots\oplus S_5=\R^{M\times N}$.  To check orthogonality, we
proceed case by case.  For example, if $Z_1\in S_1$ and $Z_2\in S_2$, then
$Z_2=\x\bar\v^T$ so $Z_1\bullet Z_2=Z_1\bullet (\x\bar\v^T)=\x^TZ_1\bar\v=0$
since $Z_1\bar\v=\bz$.  The identity $Z\bullet(\x\y^T)=\x^TZ\y$ similarly shows
that $Z_1$ is orthogonal to all of $S_2,\ldots,S_5$.
Next, observe that $Z_2\in S_2$ has nonzero entries only in positions
indexed by $U^*\times \tilde V^*$, where $\tilde V^*$ denotes
$\{1,\ldots,N\}-V^*$.  Similarly, $Z_3\in S_3$ has nonzero entries
only in positions indexed by $\tilde U^*\times V^*$, and $Z_4\in S_4$
and $Z_5\in S_5$ have nonzero entries only in positions indexed by 
$U^*\times V^*$.  Thus, the nonzero entries of $S_2$, $S_3$ and $S_4\oplus S_5$
are disjoint, and hence these spaces are mutually orthogonal.  The only
remaining case is to show that $S_4$ and $S_5$ are orthogonal; this
follows because a matrix in $S_5$ is a multiple of the all $1$'s
matrix in positions indexed by $U^*\times V^*$, 
while the entries of a matrix in $S_4$, also only in positions indexed
by $U^*\times V^*$, sum to $0$.

Now we must show that $S_1\oplus\cdots\oplus S_5=\R^{M\times N}$.  Select
a $Z\in\R^{M\times N}$.  We first split off an $S_5$ component:
let $\alpha=\bar\u^TZ\bar\v/((\bar\u^T\bar\u)(\bar\v^T\bar\v))$
and define $Z_5=\alpha\bar\u\bar\v^T$.  Then $Z_5\in S_5$.  Let
$\dot Z=Z-Z_5$.  One checks from the definition of $\alpha$ that
$\bar\u^T\dot Z\bar\v=0$.  It remains to write $\dot Z$ as a matrix
in $S_1\oplus\cdots\oplus S_4$.

Next we split off an $S_1$ component.
Let $\x=\dot Z\bar\v/\bar\v^T\bar\v$ and $\y=\dot Z^T\bar\u/\bar\u^T\bar\u$.
Observe that $\bar\u^T\x=\bar\u^T\dot Z\bar\v/\bar\v^T\bar\v=0$.  Similarly,
$\bar\v^T\y=0$.  Let $\ddot Z=\x\bar\v^T+\bar\u\y^T$ and $Z_1=\dot Z-\ddot Z$.
Then 
\begin{eqnarray*}
Z_1\bar\v & = & 
\dot Z\bar\v - \ddot Z\bar\v \\
& = &\dot Z\bar\v - \x\bar\v^T\bar\v
-\bar\u\y^T\bar\v \\
& = &
\dot Z\bar\v - \x\bar\v^T\bar\v \\
& = &
\bz,
\end{eqnarray*}
where the third line follows because $\bar\v^T\y=0$ and the fourth by
definition of $\x$.  Similarly, $Z_1^T\bar\u=\bz$.  Thus, $Z_1\in S_1$.

It remains to split $\ddot Z$ among $S_2$, $S_3$ and $S_4$.  
Write $\x=\x_1+\x_2$, where $\x_1$ is nonzero only in entries indexed
by $U^*$ while $\x_2$ is nonzero only in entries indexed by $\tilde U^*$.
 Similarly,
split $\y=\y_1+\y_2$ using $V^*$ and $\tilde V^*$.
Then $\ddot Z=\x_1\bar\v^T+\x_2\bar\v^T+\bar\u\y_1^T
+\bar\u\y_2^T$.  Then $\x_2\bar\v^T\in S_2$ and $\bar\u\y_2^T\in S_3$,
so define $Z_2=\x_2\bar\v^T$ and $Z_3=\bar\u\y_2^T$.
Finally, we must consider the remaining term 
$Z_4=\ddot Z -\x_2\bar\v^T-\bar\u\y_2^T= \x_1\bar\v^T+\bar\u\y_1^T$.
This has the form required for membership in $S_4$, but it remains to
verify that the sum of entries of $Z_4$ add to zero.  This is shown
as follows:
\begin{eqnarray*}
Z_4\bullet (\e\e^T)&=& Z_4\bullet(\bar\u \bar\v^T) \\
&=& \bar\u^TZ_4\bar\v \\
& = & (\bar\u^T\x_1)(\bar\v^T\bar\v) + (\bar\u^T\bar\u)(\y_1^T\bar\v) \\
& = &(\bar\u^T\x)(\bar\v^T\bar\v) + (\bar\u^T\bar\u)(\y^T\bar\v) \\
&=& 0+0.
\end{eqnarray*}
The second line follows because $Z_4$ is all zeros outside entries indexed
by $U^*\times V^*$.  The fourth line follows because $\bar\u$ is zero
outside $U^*$ and similarly for $\bar \v$.  The last line follows from 
equalities derived in the previous paragraph.

This concludes the proof of the claim that $S_1,\cdots, S_5$ split 
$\R^{M\times N}$ into mutually orthogonal subspaces.

Now we prove the uniqueness of $X^*$ under the assumption that 
$\mu>0$ and $\Vert W\Vert < 1$.  Let $X$ be a feasible solution
different from $X^*$.  Write $X-X^*=Z_1+\cdots+Z_5$, where
$Z_1,\ldots,Z_5$ lie in $S_1,\ldots,S_5$ respectively.  Now we
consider several cases.

The first case is that $Z_1\ne 0$.  Then since $\Vert W\Vert <1$
and $Z_1\bar\v=\bz$, $Z_1^T\bar\u=\bz$, it follows 
from Lemma~\ref{lemma: subdifferential of nuclear norm}
that $W+\eps Z_1$ lies
in $\partial\Vert\cdot\Vert_*(X^*)$ for $\eps>0$ sufficiently small.
This means that `$W$' appearing in \eref{eq:subgrad} above may be
replaced by $W+\eps Z_1$ without harming the validity of the inequality.
This adds the term $\eps Z_1\bullet(X-X^*)$ to the right-hand sides of the 
inequalities following \eref{eq:subgrad}.  Observe that
$Z_1\bullet(X-X^*)=Z_1\bullet(Z_1+\cdots+Z_5)=Z_1\bullet Z_1>0$.
Thus, a positive quantity is added to all these right-hand sides,
so we conclude $\Vert X\Vert_*-\Vert X^*\Vert_*>0$.

For the remaining cases, we assume $Z_1=0$.  We claim that 
$Z_2=Z_3=0$ as well.  For example, suppose $Z_2=\x\bar\v^T$.
Recall that $Z_2$ is nonzero only for entries indexed by
$\tilde U^*\times V^*$ (and in particular, $\x$ must be zero
on $U^*$).
Since all of $Z_3$, $Z_4$ and $Z_5$ are zero in
$\tilde U^*\times V^*$, $Z_2(i,j)=X_{ij}-X^*_{ij}$ for
$(i,j)\in \tilde U^*\times V^*$.
Select an $i\in\tilde U^*$; we claim that there exists a $j\in V^*$ such that
$(i,j)\notin E$.  If not, then $(U^*\cup\{i\})\times V^*$ would define
a solution to \eref{eq:relax1} with greater cardinality (and hence
lower objective value) than $U^*\times V^*$, but we have already proven that
$U^*\times V^*$ defines the optimal solution.
Thus, there is a constraint in \eref{eq:relax1} of the form
$X_{i,j}=0$ that must be satisfied by both $X$ and $X^*$.  This means
that the $(i,j)$ entry of $Z_2$ is zero.  On the other hand, this
entry is $x_i\bar v_j=x_i$.  Thus, we conclude $x_i=0$.  Therefore,
$\x=\bz$ so $Z_2$ vanishes.  The same argument shows $Z_3$ vanishes.

The last case is thus that $Z_1$, $Z_2$ and $Z_3$ are all zero, so at
least one of $Z_4$ or $Z_5$ must be nonzero.  Since the sum of entries
of $Z_4$ is zero and $X$ is feasible
(and, in particular, feasible for the constraint
$X\bullet(\e\e^T)\ge mn$), it follows that the sum of entries
of $Z_5$ must be nonnegative, i.e., $Z_5=\alpha \bar\u\bar\v^T$ with
$\alpha \ge 0$.  If $\alpha >0$ then we are finished with the proof:
the assumption $\mu>0$ and $\alpha>0$ imply that both factors
in \eref{eq:kkt1simp} are positive, hence $\Vert X\Vert_*-\Vert X^*\Vert_*>0$.

Thus, we may assume that $Z_5=0$ so $Z_4\ne 0$.
Recall that $Z_4$ is nonzero only in positions
indexed by $U^*\times V^*$.
We can now draw the following conclusions about the singular values
of $X$ versus those of $X^*$.  Recall that the rank of $X^*$ is one, and
its sole nonzero singular value is $\sqrt{mn}$ and hence
$\Vert X^*\Vert_F=\Vert X^*\Vert=\Vert X^*\Vert_*=\sqrt{mn}$.
Observe that the sum
of entries of $X$, namely, $\bar\u^T X\bar\v$, is also $mn$.  
But $\bar \u^T X\bar\v\le \Vert\bar\u\Vert\cdot\Vert X\Vert\cdot\Vert\bar\v\Vert
=\Vert X\Vert \sqrt{mn}$.  Thus, $\Vert X\Vert\ge \sqrt{mn}$,
i.e., $\sigma_1(X)\ge \sigma_1(X^*)$, where $\sigma_k(A)$ is notation for
the $k$th singular value of matrix $A$.

Next, note that $\Vert X\Vert_F>\Vert X^*\Vert_F$ for the following
reason.  Recall that the Frobenius norm is equivalent to the Euclidean
vector norm applied to the matrix when regarded as a vector.  Furthermore,
when regarded as a vector, $X$ is the sum of two orthogonal components,
namely $X^*$ and $Z_4$.  Therefore, by the Pythagorean theorem,
$\Vert X\Vert_F=\left(\Vert X^*\Vert_F^2+\Vert Z_4\Vert^2\right)^{1/2}$.
Since $Z_4\ne 0$, $\Vert X\Vert_F>\Vert X^*\Vert_F$.

Thus, we know that $\sigma_1(X)\ge \sigma_1(X^*)$ and
that $\sigma_1(X)^2+\sigma_2(X)^2>\sigma_1(X^*)^2$.
These two inequalities imply that $\sigma_1(X)+\sigma_2(X)>\sigma_1(X^*)$,
and therefore $\Vert X\Vert_*>\Vert X^*\Vert_*$.

Thus, we have shown that in all cases, if $\Vert W\Vert<1$,
$\mu>0$ and $X$ is a feasible point distinct from $X^*$, then
$\Vert X\Vert_*>\Vert X^*\Vert_*$.  This proves that $X^*$ is the
unique optimizer.
\end{proof}

This theorem immediately specializes to the following theorem if
we take the case that $G$ is an $N$-node undirected
graph, that $M=N$, $m=n$, and 
$E=E(G)\cup\{(i,i):i\in V(G)\}$.

\begin{theorem}
Let $V^*$ be the nodes of an $n$-node clique contained in
an $N$-node undirected
graph $G=(V,E)$.
Let $\bar\v\in\R^V$ be the characteristic vector of $V^*$.
Let $X^*=\bar\v\bar\v^T$.  (Clearly $X^*$ is feasible for
\eref{clique relaxation}).  Suppose also that there
exist $W\in\R^{V\times V}$, $\lambda \in\R^{V\times V}$ and
$\mu\in\R_+$ such that
$W \bar{\v} = \bz $, $\bar{\v}^T W = \bz$, $\|W \| \le 1$ and
\begin{align}
    \label{clique KKT}
    \frac{\bar{\v}\bar{\v}^T}{n} + W &= \mu \e \e^T
                                        + \sum_{\substack{(i,j) \in \tilde E }} \lambda_{ij} \e_i \e_j^T.
\end{align}
Then $X^*$ is an optimal solution to \eref{clique relaxation}.  
Moreover, $V^*$ is a maximum clique of $G$.
Furthermore, if $\Vert W\Vert<1$
and $\mu > 0$,
then $X^*$ is the unique optimizer of \eref{clique relaxation},
and $V^*$ is the unique maximum clique of $G$.
\label{thm:optimalityclique}
\end{theorem}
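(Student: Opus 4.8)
The plan is to obtain Theorem~\ref{thm:optimalityclique} entirely as a specialization of Theorem~\ref{thm:optimality1}, so almost no new work is required beyond checking that the substitutions line up and then translating the matrix-level conclusions back into statements about cliques. Concretely, I would set $M=N$, $m=n$, and $U^*=V^*$ so that $\bar\u=\bar\v$ and $X^*=\bar\v\bar\v^T$, and take $E=E(G)\cup\{(i,i):i\in V(G)\}$. The first step is to verify that under these substitutions the generalized problem~\eref{eq:relax1} is literally \eref{clique relaxation}: the constraint $\sum_{i,j}X_{ij}\ge mn$ becomes $\sum_{i,j}X_{ij}\ge n^2$, and since $\tilde E$ is the complement of $E$, the condition $(i,j)\in\tilde E$ means exactly $(i,j)\notin E(G)$ and $i\ne j$, matching \eref{clique missing edges}. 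Likewise \eref{clique KKT} is precisely \eref{clique KKT0} after the substitutions. Having checked this, Theorem~\ref{thm:optimality1} immediately yields that $X^*$ is optimal for \eref{clique relaxation}, and that it is the unique matrix optimizer whenever $\Vert W\Vert<1$ and $\mu>0$. (Feasibility of $X^*$ is clear: its entry sum is $(\e^T\bar\v)^2=n^2$, and $X^*_{ij}=\bar v_i\bar v_j$ vanishes unless $i,j\in V^*$, in which case $(i,j)\in E$ because $V^*$ is a clique.)

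Next I would use the cardinality conclusion of Theorem~\ref{thm:optimality1} to show $V^*$ is a maximum clique. Let $K$ be any clique of $G$ with node set $I$. Because every pair of distinct nodes of $I$ is joined by an edge and every loop $(i,i)$ lies in $E$, we have $I\times I\subset E$. Applying the theorem with the rectangle $I\times J$ taken as $I\times I$ gives $|I|^2\le mn=n^2$, hence $|I|\le n$. Since $V^*$ itself has $n$ nodes, no clique is larger, so $V^*$ is maximum.

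Finally, assuming $\Vert W\Vert<1$ and $\mu>0$, I would deduce uniqueness of the maximum clique from uniqueness of the matrix optimizer. Suppose $I'$ is the node set of any maximum clique, so $|I'|=n$; let $\bar\u'$ be its characteristic vector and put $X'=\bar\u'(\bar\u')^T$. Then $X'$ is feasible for \eref{clique relaxation}, since its entry sum is $n^2$ and it vanishes on $\tilde E$ by the clique property, and its nuclear norm equals $\Vert\bar\u'\Vert^2=n=\Vert X^*\Vert_*$, so $X'$ is also optimal. Uniqueness of the optimizer forces $X'=X^*$, and since distinct vertex sets have distinct characteristic vectors and hence distinct rank-one matrices, $I'=V^*$. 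Thus $V^*$ is the unique maximum clique. The only genuinely non-mechanical point in the whole argument is this last translation between matrix uniqueness and clique uniqueness (together with the matching use of the cardinality bound); everything else is bookkeeping on top of Theorem~\ref{thm:optimality1}.
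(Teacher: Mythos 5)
Your proposal is correct and follows exactly the paper's route: the paper derives Theorem~\ref{thm:optimalityclique} as an immediate specialization of Theorem~\ref{thm:optimality1} with $M=N$, $m=n$, $U^*=V^*$, and $E=E(G)\cup\{(i,i):i\in V(G)\}$, which is precisely what you do. Your added bookkeeping (verifying the constraint sets match, extracting the maximum-clique claim from the cardinality bound $|I|\cdot|J|\le mn$, and translating matrix uniqueness into clique uniqueness) is sound and simply makes explicit what the paper leaves to the reader.
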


\noindent
{\bf Remark:} It may appear that we need to know the value of $n$
prior to applying the theorem since $n$ is present in the statement of
\eref{clique relaxation}.  In fact, this is not the case: we observe
that the factor $n^2$ appearing in \eref{clique relaxation} is the
sole inhomogeneity in the problem.  This means that we obtain
the same solution, rescaled in the appropriate way, if we replace $n^2$
by $1$ in \eref{clique relaxation}.  Thus, $n$ does not need to be
known in advance to apply this theorem.

\vspace{0.1in}
For the next two subsections, we consider two scenarios for constructing
 $G$
and try to find $X^*$, $W$ and values for the
multipliers to satisfy the conditions of the previous theorem.
For both subsections, we use the following choices.
We take $\mu = 1/n$ where $n=|V^*|$.
We define $W$ and $\lambda$ by considering the following cases:

\begin{itemize}
    \item[($\omega_1$)]
        If $(i,j) \in V^* \times V^*$, we choose $W_{ij} =0$ and $\lambda_{ij} = 0$.
        In this case, the entries on other side of (\ref{clique KKT}) corresponding to this case become $1/n + 0 = 1/n + 0$.
    \item[($\omega_2$)]
        If $(i,j) \in E - (V^*\times V^*)$ such that $i \neq j$, then we choose $W_{ij} = 1/n$ and $\lambda_{ij} = 0$.
        Then the two sides of (\ref{clique KKT}) become $0 + 1/n = 1/n + 0.$
    \item[($\omega_3$)]
        If $i \notin V^*,$ we set $W_{ii} = 1/n$. Again the two sides of (\ref{clique KKT}) become
        $ 0 + 1/n = 1/n + 0$.
    \item[($\omega_4$)]
        If $(i,j) \notin E$, $i \notin V^*,$ $j\notin V^*$, then we choose $W_{ij} = -\gamma /n$ and $\lambda_{ij} = -(1+ \gamma)/n$ for
        some constant $\gamma \in \R$.
        The two sides of (\ref{clique KKT}) become $0 - \gamma/n = 1/n - (1 + \gamma)/n$.
        The value of $\gamma$ is specified below.
    \item[($\omega_5$)]
        If $(i,j) \notin E$, $i \in V^*$, $j\notin V^*$, then we choose
        \[
            W_{ij} = - \frac{p_j}{n(n-p_j)}, \;\;\; \lambda_{ij} = -\frac{1}{n} - \frac{p_j}{n(n-p_j)}
        \]
        where $p_j$ is equal to the number of edges in $E$ from $j$ to $V^*$.
    \item[($\omega_6$)]
        If $(i,j) \notin E,$ $i\notin V^*$, $j \in V^*$ then choose $W_{ij}$, $\lambda_{ij}$ symmetrically with the previous case.
\end{itemize}

First, observe that $W \bar{\v} = \bz$.
Indeed, for entries $i \in V^*$, $W(i,:) \bar{\v} = 0$ since $W(i, V^*) = 0$ for such entries.
For entries $i \in V-V^*$,
\[
    W(i,:) \bar{\v} = p_i \frac{1}{n} - (n- p_i) \frac{p_i}{n(n-p_i)} = 0
\]
by our special choice of $W(i,j)$ in cases 5 and 6.

It remains to determine which graphs $G$ yield $W$ as defined by ($\omega_1$)--($\omega_6$) such that $\|W\| < 1$.
We present two different analyses.

\subsection{The Adversarial Case}
\label{sec:maxcliqueadver}

Suppose that the edge set of the graph $G = (V,E)$ is generated as follows. We first add a clique $K_{V^*}$ with vertex set $V^*$ of size $n$.
Then, an adversary is allowed to add a number of the remaining $|V|(|V|-1)/2 - n(n-1)/2$ potential edges to the graph.
We will show that, under certain conditions, our adversary can add up to $O(n^2)$ edges to the graph and $K_V^*$ will still be the unique maximum clique of $G$.

We first introduce the following notation. Let $W^D \in \R^{V\times V}$ denote the matrix with diagonal entries equal to the
diagonal entries of $W$ and all other entries equal to 0. Let $W^{ND}$ be the matrix whose nondiagonal entries are equal to the
corresponding nondiagonal entries of $W$ and whose diagonal entries are equal to 0. So $W = W^D + W^{ND}$.

Now suppose $G = (V,E)$ contains a clique $K_{V^*}$ of size $n$ with vertices indexed by $V^* \in \R^V$. Moreover, suppose that $G$
contains at most $r$ edges not in $K_{V^*}$ and each vertex in $V-V^*$ is adjacent to at most $\delta n$ vertices in $V^*$ for some $\delta \in (0,1)$.
Consider $W$ as defined by ($\omega_1$)--($\omega_6$) with $\gamma = 0$.
By the triangle inequality,
\[
    \|W\|^2 \le (\|W^D\| + \|W^{ND}\|)^2 \le 2(\|W^D\|^2 + \|W^{ND}\|^2) = 2(1/n^2 + \|W^{ND}\|^2)
\]
since $\|W^D\| = 1/n$.
Applying the bound $\|W\| \le \|W\|_F$, it suffices to determine which values of $r$ yield
\[
    \|W^{ND}\|_F^2 = 2\|W(V^*, V-V^*)\|^2_F + \|W^{ND}(V-V^*, V-V^*)\|^2_F < (n^2 - 2)/(2n^2)
\]
since, by the symmetry of $W$,
\[
	W^{ND}(V^*, V-V^*) = W(V^*, V-V^*) = W(V - V^*, V^*).
\]
The diagonal entries of $W^{ND}(V-V^*, V-V^*)$ are equal to $0$ and at most $2r$ of the remaining entries are equal to $1/n$.
Therefore,
\[
    \|W^{ND}(V-V^*,V-V^*)\|^2_F \le {2r}/{n^2}.
\]
Moreover, since $n - p_j \ge (1-\delta) n$,
\begin{align*}
    \|W(V^*,V-V^*)\|^2_F 	&= \sum_{j\in V-V^*}\left(  p_j \cdot \frac{1}{n^2} + (n - p_j) \cdot \frac{p_j^2}{(n - p_j)^2 n^2} \right)     \\
    						&= \sum_{j \in V-V^*} \left( \frac{p_j}{n^2} + \frac{p_j^2}{(n - p_j) n^2} \right) \\
						&\le \sum_{j \in V-V^*} \left( \frac{p_j}{n^2} + \frac{ \delta n p_j }{(1-\delta) n^3} \right) \\
						&= \left( \frac{1}{1-\delta} \right) \sum_{j \in V-V^*} \frac{p_j}{n^2}  \\
						&\le \left( \frac{1}{1-\delta} \right) \frac{r}{n^2}.
\end{align*}
Thus, the optimality and uniqueness conditions given by Theorem~\ref{thm:optimality1} are satisfied by $X^*$ if
\[
   \left( 1 + \frac{1}{1-\delta} \right) r < (n^2-2)/4.
\]
Equivalently,
\[
	r < \frac{1 - \delta}{4 (2 - \delta)}(n^2 - 2).
\]
Therefore, $G$ can contain up to $O(n^2)$ edges other than those in $V^* \times V^*$, and yet $V^*$ will remain the unique maximum clique of $G$.

Note that these bounds are the best possible up to the constant factors.  In particular,
if the adversary were able to insert $(n+1)(n+2)/2$ edges, then a new clique could be created
larger than the planted clique.  Thus, the adversary must be limited to 
${\rm const}\cdot n^2$ edges for ${\rm const}<1/2$.  Similarly, if the adversary could
join a nonclique vertex to $n$ clique vertices, then the adversary would have enlarged
the clique.  Thus, the restriction that a nonclique vertex is adjacent to at most 
${\rm const}\cdot n$ clique vertices is the best possible.

\subsection{The Randomized Case}
\label{sec:maxcliquerand}

Let $V$ be a set of vertices with $|V| = N$ and consider a subset $V^* \subseteq V$ such that $|V^*| = n$.
We construct the edge set $E$ of the graph $G = (V,E)$ as follows:
\begin{enumerate}
    \item[($\Gamma_1$)]
        For all $(i,j) \in V^*\times V^*$, $(i,j) \in E$.
    \item[($\Gamma_2$)]
        Each of the remaining $N(N - 1)/2 - n(n-1)/2$ possible edges is added to $E$ independently at random with probability $p \in [0,1)$ .
\end{enumerate}
Notice that, by our construction of $E$, $G$ contains a clique of size $n$ with vertices indexed by $V^*$.
We wish to determine which $n$, $N$ yield $G$ as constructed by ($\Gamma_1$) and ($\Gamma_2$) such that with
high probability $X^* = \bar{\v} \bar{\v}^T$ is optimal  for the convex relaxation of the clique problem given by (\ref{clique relaxation}).
The following theorem states the desired result.

\begin{theorem}
    \label{main random case}
    There exists an $\alpha > 0$ depending on $p$ 
such that for all $G$ constructed via $(\Gamma_1)$, $(\Gamma_2)$ with $n \ge \alpha \sqrt{N}$,
    the clique defined by $V^* \times V^*$ is the unique maximum clique of $G$ 
and will correspond to the unique solution of \eref{clique relaxation}
with probability tending exponentially  to $1$ as $N \ra \infty$.
\end{theorem}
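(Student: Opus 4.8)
The plan is to reduce the entire statement to Theorem~\ref{thm:optimalityclique}: it suffices to exhibit multipliers $W$, $\lambda$ and $\mu$ satisfying \eref{clique KKT} with $W\bar\v=\bz$, $\bar\v^TW=\bz$, $\mu>0$ and, crucially, $\|W\|<1$, all holding with probability tending to $1$. I would use the explicit construction ($\omega_1$)--($\omega_6$) together with $\mu=1/n$ and the centering choice $\gamma=p/(1-p)$. With this $\gamma$, each entry of $W$ in the block $(V-V^*)\times(V-V^*)$ is $1/n$ with probability $p$ and $-\gamma/n=-p/((1-p)n)$ with probability $1-p$, i.e.\ exactly $1/n$ times a draw from $\Omega$; this is the natural choice because it makes that block mean-zero, which is precisely what Theorem~\ref{Furedi-Komlos} requires (cf.\ Remark 3 there). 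The equation \eref{clique KKT}, the orthogonality conditions $W\bar\v=\bz$ and (by symmetry of $W$) $\bar\v^TW=\bz$, and $\mu>0$ all hold deterministically by construction—the first of these was already verified in the text preceding this subsection—so the whole probabilistic content of the theorem is the single estimate $\|W\|<1$.

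To bound $\|W\|$ I would exploit its block structure relative to the partition $V=V^*\cup(V-V^*)$. The $V^*\times V^*$ block of $W$ is zero, so writing $C=W(V^*,V-V^*)$ and $D=W(V-V^*,V-V^*)$ one gets $\|W\|\le\|C\|+\|D\|$ by splitting $W$ into its symmetric off-diagonal part (whose norm equals $\|C\|$) and its $D$-block. The matrix $C$ is precisely $(1/n)\tilde A$, where $A$ is the $n\times(N-n)$ matrix recording the random edges between $V^*$ and $V-V^*$ with entries drawn from $\Omega$ and $\tilde A$ is its column-centered modification from Theorem~\ref{thm:AtildeA} (here $p_j$ plays the role of the column count $n_j$, and the clique size $n$ is the number of rows). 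Likewise $nD$ equals the identity plus a zero-diagonal symmetric matrix whose off-diagonal entries are drawn from $\Omega$.

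The spectral estimates then come from the random-matrix results of Section~\ref{sec:prelim}. For the genuinely i.i.d.\ pieces I would introduce the full $N\times N$ symmetric matrix $M$ obtained by filling every entry outside $V^*\times V^*$ with the graph's own $\Omega$-values and every entry inside $V^*\times V^*$ with fresh independent $\Omega$-values; $M$ is then a bona fide symmetric $\Omega$-matrix, so Theorem~\ref{Furedi-Komlos} gives $\|M\|\le 3\sigma\sqrt N$ with probability at least $1-\exp(-cN^{1/6})$. Since $A(V^*,V-V^*)$ and the off-diagonal part of $nD$ are submatrices of $M$ selected by index sets, and selecting rows/columns cannot increase the spectral norm, both $\|A\|$ and the relevant part of $\|nD\|$ are at most $3\sigma\sqrt N$; the zeroed diagonal contributes only an $O(1)$ correction. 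This submatrix-embedding trick is what lets me use Theorem~\ref{Furedi-Komlos} even though the off-diagonal block is rectangular with an unbounded aspect ratio, avoiding any fixed-ratio hypothesis. For the non-i.i.d.\ matrix $\tilde A$ I would write $\|\tilde A\|\le\|A\|+\|A-\tilde A\|\le\|A\|+\|A-\tilde A\|_F$ and invoke Theorem~\ref{thm:AtildeA} (with its ``$N$'' taken to be $N-n$) to get $\|A-\tilde A\|_F\le\sqrt{c_1(N-n)}$ off an exceptional event of probability at most $(2/3)^{N-n}+(N-n)c_2^{\,n}$. Assembling these, $\|W\|\le(1/n)\bigl(3\sigma\sqrt N+\sqrt{c_1(N-n)}\bigr)+(1/n)\bigl(1+3\sigma\sqrt N+O(1)\bigr)$, and with $n\ge\alpha\sqrt N$ every term is $O(1/\alpha)$; choosing $\alpha$ large (depending only on $p$ through $\sigma$ and $c_1$) forces $\|W\|<1$.

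Finally I would collect the exceptional events by a union bound: the total failure probability is at most $\exp(-cN^{1/6})+(2/3)^{N-n}+(N-n)c_2^{\,n}$, which tends to $0$, so the hypotheses of Theorem~\ref{thm:optimalityclique} hold (with $\mu=1/n>0$ and $\|W\|<1$) with probability tending to $1$; that theorem then yields both optimality of $X^*=\bar\v\bar\v^T$ for \eref{clique relaxation} and uniqueness of $V^*$ as maximum clique. The main obstacle is the off-diagonal block $C$: its entries are not independent, since all entries of a given column share the count $p_j$, so Theorem~\ref{Furedi-Komlos} cannot be applied to it directly. Overcoming this is exactly the purpose of Theorem~\ref{thm:AtildeA}, which bounds the Frobenius distance from $\tilde A$ to the i.i.d.\ matrix $A$ and thereby transfers the spectral bound; the regime $n\sim\sqrt N$ is the precise threshold at which $\sqrt N/n$ is a constant, so the free parameter $\alpha$ can be used to push $\|C\|$, and hence $\|W\|$, below $1$.
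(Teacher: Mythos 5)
Your proposal is correct and follows essentially the same route as the paper: reduce everything to Theorem~\ref{thm:optimalityclique} with the $(\omega_1)$--$(\omega_6)$ construction and $\mu=1/n$, bound the i.i.d.\ parts of $W$ via Theorem~\ref{Furedi-Komlos} and the column-centered block via Theorem~\ref{thm:AtildeA}, and take $\alpha$ large so that $\|W\|<1$ off an exponentially small exceptional event. The only differences are organizational: you split $W$ by blocks and invoke submatrix monotonicity of the spectral norm where the paper writes $W=W_1+\cdots+W_5$ with explicit correction matrices, and your sign convention $\gamma=p/(1-p)$ is the one actually consistent with the formula in $(\omega_4)$ and with the distribution the paper assigns to $W_1$.
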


\begin{proof}
Consider the matrix $W$ constructed as in ($\omega_1$)--($\omega_6$) with $\gamma = -p/(1 - p).$
By Theorem~\ref{thm:optimalityclique}, $X^*$ is the unique optimum if
$$
   \|W\| < 1 \;\;\; \mbox{and} \;\;\;
        p_j < n \mbox{ for all } j \in V-V^*
$$
We first show that $\|W\| < 1$ with probability tending exponentially to 1  as $N \ra \infty$ in the case that $n = \Omega(\sqrt{N})$.
We write $W = W_1 + W_2 + W_3 + W_4 + W_5$, where each of the five terms is defined as follows.

We first define $W_1$.
For cases ($\omega_2$) and ($\omega_4$), choose $W_1(i,j) = W(i,j)$.
For cases ($\omega_5$) and ($\omega_6$), take $W_1(i,j) = -p/((1-p)n)$.
For case ($\omega_1$), choose $W_1(i,j)$ randomly such that $W_1(i,j)$ is equal to $1/n$ with probability $p$ and equal to $-p/((1-p)n)$ otherwise.
Similarly, in case ($\omega_3$), take $W_1(i,i)$ to be equal to $1/n$ with probability $p$ and equal to $-p/((1-p)n)$ otherwise.
By construction, each entry of $W_1$ is an independent random 
variable with the distribution
\[
    W_1(i,j) = \left\{  \begin{array}{ll}
                        1/n         & \mbox{with probability } p, \\
                        -p/((1-p)n) & \mbox{with probability } 1-p.
                    \end{array}
            \right.
\]
Therefore, applying Lemma~\ref{Furedi-Komlos} shows that there exists constant $c_1 > 0$ such that
\begin{equation}
    \label{eqn: W1 bound}
    \|W_1\| \le 3 \left(\frac{p}{1-p} \right)^{1/2} \frac{\sqrt{N}}{n}
\end{equation}
with probability at least to $1 - \exp(c_1 N^{1/6})$ for some constant $c_1 > 0$.

Next, $W_2$ is the correction matrix to $W_1$ in case ($\omega_1$).
That is, $W_2(i,j)$ is chosen such that
\[
    W_2(i,j) + W_1(i,j) = W(i,j) = 0
\]
for all $(i,j) \in V^* \times V^*$ and is zero everywhere else.
As before, applying Lemma~\ref{Furedi-Komlos} shows that 
\begin{equation}
    \label{eqn: W2 bound}
    \|W_2\| \le 3 \left(\frac{p}{1-p} \right)^{1/2} \frac{1}{\sqrt{n}}
\end{equation}
with probability at least $1 - \exp(c_1 n^{1/6})$.
Similarly, $W_3$ is the correction to $W_3$ in case ($\omega_3$), that is
\[
    W_3(i,i) = W(i,i) - W_1(i,i)
\]
for all $i \in V - V^*$ and all other entries are equal 0.
Therefore, $W_3$ is a diagonal matrix with diagonal entries bounded by $2/n$. It follows that
\begin{equation}
    \label{eqn: W3 bound}
    \|W_3\| \le \frac{2}{n}.
\end{equation}
Finally, $W_4$ and $W_5$ are the corrections for cases ($\omega_5$) and ($\omega_6$) respectively.  These are exactly of the form $(A-\tilde A)/n$ as in 
Theorem~\ref{thm:AtildeA}, in which $N$ in the theorem stands for
$N-n$ in the present context.
Examining each term of \eref{eq:bigprob} shows that in the case
$n=\Omega(N^{1/2})$, the probability on the right-hand side is
the form $1-c\exp(-kN^{c_2})$. 
It follows that there exists constant $\alpha_4 > 0$ such that
\[	
	\|W_4\|^2 \le \|W_4\|^2_F < \alpha_4^2 N n^{-2}
\]
with probability tending exponentially to 1 as $N \ra \infty$.
Moreover, since Condition F is satisfied in this case, $p_j < n$ for all $j \in V - V^*$.
Notice that, by symmetry, $W_4 = W_5^T$.
Thus, since each of $W_1, W_2,\dots, W_5$ is bounded by an arbitrarily small 
constant if $n = \Omega(\sqrt{N})$, there exists constant $\alpha > 0$ such that $\|W\| < 1$ with probability
tending exponentially to 1 as $N \ra \infty$ as required.
\end{proof}

\section{Maximum Edge Biclique}
\label{sec:maxbiclique}

Consider a bipartite graph $G = ((U,V), E)$ where $|U| = M$, $|V| = N$.
The adjacency matrix of a biclique $H$ of $G$ is 
rank-one matrix $X\in \R^{M\times N}$.  This matrix $X$ has
the property that $X_{ij} = 0$ for all $i \in U, j\in V$
such that $(i,j) \notin E$. It follows that a biclique of $G$ of size $mn$ can be found (if one exists) by solving the
rank minimization problem
\begin{align}
        \min \; & \rank(X)      \notag        \\
        \st  \; & \sum_{i \in U} \sum_{j\in V} X_{ij} \ge mn,    \label{constraint i} \\
                & X_{ij} = 0 \;\; \forall \; (i,j) \in (U\times V)-E,      \label{constraint ii} \\
                & X \in [0,1]^{U \times V}. \label{constraint iii}
\end{align}
A rank-one solution $X^*$ to this problem corresponds to the adjacency matrix of
a biclique of $G$ containing at least $mn$ edges.
As with the maximum clique problem, this rank minimization problem is still NP-hard. As before, we underestimate $\rank(X)$ with $\|X\|_*$.
We obtain the following convex optimization problem:
\begin{align}	\label{biclique relaxation}
	\begin{array}{rl}
   		\min 	& \|X\|_* \\
		\st  	& \sum_{i \in V} \sum_{j \in V} X_{ij} \ge mn, \\
            		& X_{ij} = 0 \;\; \mbox{if } (i,j) \notin E. 
	\end{array}
\end{align}
Using the Karush-Kuhn-Tucker conditions, we derive conditions for which the adjacency matrix of a graph comprising a biclique of $G$ is optimal for this relaxation.
Indeed, the following is an immediate consequence (essentially a
restatement) of Theorem~\ref{thm:optimality1}.

\begin{theorem}
\label{thm: biclique optimality and uniqueness}
Let $U^*\times V^*$ be the vertex set of a biclique in $G$ in which 
$|U^*|=m$ and $|V^*|=n$.
Let $\bar\u\in \R^M$ be the characteristic vector of $U^*$, and let
$\bar\v\in\R^N$ be the characteristic vector of $V^*$.
Let $X^*=\bar\u\bar\v^T$.  
(Clearly $X^*$ is feasible for
\eref{biclique relaxation}).
Let $E=E(G)$ and let $\tilde E$ be its
complement.
  Suppose also that there
exist $W\in\R^{M\times N}$, $\lambda \in\R^{M\times N}$ and
$\mu\in\R_+$ such that
$W \bar{\v} = \bz $, $\bar{\u}^T W =\bz$, $\|W \| \le 1$ and
\begin{align}
    \label{biclique KKT}
    \frac{\bar{\u}\bar{\v}^T}{\sqrt{mn}} + W &= \mu \e \e^T
                                        + \sum_{\substack{(i,j) \in \tilde E }} \lambda_{ij} e_i e_j^T.
\end{align}
Then $X^*$ is an optimal solution to \eref{biclique relaxation}.  Moreover,
$G$ does not contain any biclique with more than $mn$ edges.
Furthermore, if $\Vert W\Vert<1$
and $\mu > 0$,
then $X^*$ is the unique optimizer of \eref{biclique relaxation}
and $U^*\times V^*$ is the unique optimal biclique.
\end{theorem}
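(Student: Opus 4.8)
The plan is to recognize that the biclique relaxation \eref{biclique relaxation} is nothing but a concrete instance of the general program \eref{eq:relax1}, and then to invoke Theorem~\ref{thm:optimality1} essentially verbatim. Concretely, I would take $M=|U|$, $N=|V|$, identify $E$ with $E(G)$ and $\tilde E$ with its complement in $U\times V$; then the objective $\|X\|_*$, the summation constraint $\sum_{i,j}X_{ij}\ge mn$, and the zero constraints $X_{ij}=0$ for $(i,j)\in\tilde E$ coincide line for line. The hypotheses imposed on $W$, $\lambda$ and $\mu$ in the present statement, namely $W\bar\v=\bz$, $\bar\u^TW=\bz$, $\|W\|\le 1$, and the dual equation \eref{biclique KKT}, are exactly the hypotheses of Theorem~\ref{thm:optimality1}, with $\bar\u$ and $\bar\v$ playing the same roles. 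Hence the first conclusion, that $X^*$ is optimal for \eref{biclique relaxation}, and the final conclusion, that $X^*$ is the unique optimizer when $\|W\|<1$ and $\mu>0$, are immediate.

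The second remaining step is to translate the auxiliary cardinality bound of Theorem~\ref{thm:optimality1} into a statement about bicliques. That theorem guarantees $|I|\cdot|J|\le mn$ for every pair $(I,J)$ with $I\subset U$, $J\subset V$ and $I\times J\subset E$. I would simply observe that a biclique of $G$ is by definition such a pair $(I,J)$ and that its number of edges is exactly $|I|\cdot|J|$; therefore no biclique of $G$ has more than $mn$ edges, which is the \emph{moreover} assertion.

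The only conclusion requiring a genuinely additional argument, and hence the main thing to be careful about, is the uniqueness of the optimal biclique $U^*\times V^*$ under the strict conditions $\|W\|<1$, $\mu>0$. I would argue by contradiction: suppose some biclique $(I,J)\ne(U^*,V^*)$ also achieves $|I|\cdot|J|=mn$ edges. Letting $\bar\u'$ and $\bar\v'$ be the characteristic vectors of $I$ and $J$, the rank-one matrix $X'=\bar\u'(\bar\v')^T$ is feasible for \eref{biclique relaxation}, since its entries vanish off $E$ and its entry sum equals $|I|\cdot|J|=mn$; and because the sole nonzero singular value of a matrix $\x\y^T$ is $\|\x\|\cdot\|\y\|$, we get $\|X'\|_*=\sqrt{|I|\cdot|J|}=\sqrt{mn}=\|X^*\|_*$. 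Thus $X'$ is also optimal, and the uniqueness of $X^*$ forces $X'=X^*$, whence $I=U^*$ and $J=V^*$, a contradiction. This construction is precisely the one used inside the proof of Theorem~\ref{thm:optimality1} to derive $|I|\cdot|J|\le mn$, so I expect no real difficulty: the whole proof is a bookkeeping exercise in matching notation, the key observation being that the feasible set and objective of \eref{biclique relaxation} are literally the $M\times N$ specialization of \eref{eq:relax1}.
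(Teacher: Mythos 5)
Your proposal is correct and matches the paper exactly: the paper presents this theorem as ``an immediate consequence (essentially a restatement) of Theorem~\ref{thm:optimality1}'' with no further proof, and your specialization $M=|U|$, $N=|V|$, together with the translation of the bound $|I|\cdot|J|\le mn$ into the edge count of a biclique, is precisely that argument. Your extra care in deducing uniqueness of the optimal biclique from uniqueness of $X^*$ (via the feasible rank-one competitor $\bar\u'(\bar\v')^T$) is a detail the paper leaves implicit, and it is handled correctly.
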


In the next two subsections, we consider two scenarios for how to construct
a bipartite graph $G$ and biclique that satisfy the conditions of the
theorem.

In both scenarios,
we will take $\mu = 1/\sqrt{mn}$ and consider  $W$ and $\lambda$ defined according to the following cases.
\begin{itemize}
    \item[($\psi_1$)]
        For $(i,j) \in U^*\times V^*$, taking $W_{ij} = 0$ and $\lambda_{ij} =0$ ensures the $ij$-entries of both sides of (\ref{biclique KKT})
        are equal to $1/\sqrt{mn}$.
    \item[($\psi_2$)]
        For $(i,j) \in E - (U^* \times V^*)$, we take $W_{ij} = 1/\sqrt{mn}$ and $\lambda_{ij} = 0$. Again, the $ij$-entries of both sides of
        (\ref{biclique KKT}) are equal to $1/\sqrt{mn}$.
    \item[($\psi_3$)]
        For $(i,j) \notin E$ such that $i \notin U^*$ and $j \notin V^*$, we select $ W_{ij} = - \gamma/\sqrt{mn}$ and $\lambda_{ij} = -(1 + \gamma)/\sqrt{mn}$
        where $\gamma$ will be defined below.
        In this case, the $ij$-entries of each side of (\ref{biclique KKT}) are 0.
    \item[($\psi_4$)]
        For $(i,j) \notin E$ such that $i\notin U^*$ and $j\in V^*$, we choose
        \[
            W_{ij} = - \frac{p_i}{(n-p_i)\sqrt{mn}}
            \mbox{ and } \lambda_{ij} = \frac{1}{\sqrt{mn}}\left( \frac{-p_i}{n-p_i} -1 \right)
        \]
        where $p_i$ is equal to the number of edges with left endpoint equal to $i$ and right endpoint in $V^*$.
        Note that if $n= p_i$ then $i$ is connected to every vertex of $V^*$ and thus the KKT condition cannot possibly be
        satisfied.
        If $p_i < n$, both sides of (\ref{biclique KKT}) are equal to $-p_i / ((n-p_i)\sqrt{mn})$.
    \item[($\psi_5$)]
        For $(i,j) \notin E$ such that $i\in U^*$ and $j \notin V^*$, we choose
        \[
            W_{ij} = - \frac{q_j}{(m-q_j)\sqrt{mn}}
            \mbox{ and } \lambda_{ij} = \frac{1}{\sqrt{mn}}\left( \frac{-q_j}{m-q_j} -1 \right)
        \]
        where $q_j$ is equal to the number of edges with right endpoint equal to $j$ and left endpoint in $U^*$.
        As before, this is appropriate only if $q_j < m$.
\end{itemize}

We next check that $W$ satisfies the requirements for $\phi$ to be a subgradient of $\bar{\u}\bar{\v}^T$:
$W\bar{\v} = \bz$, $W^T \bar{\u} = \bz$, and $\|W\| \le 1$.
To show that $W \bar{\v} = \bz$, choose row $i$ of $W$ and consider $W(i,:)\bar{\v} = \sum_{j \in V^*} W_{ij}$.
If $i \in U^*$ then $W_{ij} = 0$ for all $j\in V^*$, so $W(i,:)\bar{\v} = 0$.
In the case $i \notin U^*$, consider each $j\in V^*$. If $(i,j) \in E$ then, by Case 2, $W_{ij} = 1/\sqrt{mn}$.
There are $p_i$ such entries, with sum $p_i/\sqrt{mn}$.
If $(i,j) \notin E$, then $W_{ij} = -p_i/((n-p_i)\sqrt{mn})$.
There are $n-p_i$ such entries, with sum $-p_i/\sqrt{mn}$.
It follows that $W(i,:)\bar{\v} = 0$ as required.

The proof that $W^T\bar{\u} = \bz$ follows is symmetric. It remains to determine which bipartite graphs $G$ yield $W$ as defined
above such that $\|W\| < 1.$
As in the maximum clique case, we present two different analyses.



\subsection{The Adversarial Case}
\label{sec:maxbicliqueadver}

Suppose that the edge set of the bipartite graph $G = ((U,V), E)$ is generated as follows. We first add a biclique $U^* \times V^*$ with $|U^*| = m$, $|V^*| = n$.
Then, as in the adversarial case for the maximum clique problem, an adversary is allowed to add a number of the remaining $|U||V| - mn$ potential edges to the graph.
We will show that, under certain conditions, our adversary can add up to $O(mn)$ edges to the graph and $U^* \times V^*$ will still be a maximum edge
biclique of $G$.

We make the following assumptions on the structure of $G$:
\begin{enumerate}
	\item
       		$G$ contains at most $r$ edges aside from those of the optimal biclique.
        	\item
		Each vertex of $V-V^*$ is adjacent to at most $\alpha m$ vertices of $U^*$ for some $\alpha \in (0,1)$.	
        	\item
        		Each vertex of $U - U^*$ is adjacent to at most $\beta n$ vertices of $V^*$ for some $\beta \in (0,1)$.
\end{enumerate}
Consider $W$ as defined by $(\psi_1)$-$(\psi_5)$ with $\gamma = 0$.
As before, we use the bound $\|W\| \le \|W\|_F$.
Notice that at most $r$ entries of $W(U-U^*, V-V^*)$ are equal to $1/\sqrt{mn}$ and the remainder are equal to 0. Therefore,
$$
	\|W(U- U^*, V-V^*)\|_F^2 \le \frac{r}{mn}.
$$
Moreover, for each $j \in V-V^*$, $q_j \le \alpha m$.
It follows that
\begin{align*}
	\|W(U^*, V-V^*)\|_F^2 &= \sum_{v \in V-V*} \left(\frac{q_v}{mn} + (m -q_v) \frac{q_v^2}{mn(m- q_v)^2} \right)\\
		&= \sum_{v\in V^*} \frac{q_v}{mn} \left( 1 + \frac{q_v}{m-q_v} \right) \\
		&\le \sum_{v \in V^*} \frac{q_v}{mn} \left(1 + \frac{\alpha}{1-\alpha} \right) \\
		&= \sum_{v \in V^*} \frac{q_v}{mn(1-\alpha)} \le \frac{r}{mn(1-\alpha)}.
\end{align*}
Similarly,
$$
	\| W(U-U^*, V^*) \|_F^2 \le \frac{r}{(1-\beta) mn}.
$$
Therefore, $\|W\| < 1$ if
$$
	r \left( 1 + \frac{1}{1-\alpha} + \frac{1}{1- \beta} \right) < mn.
$$
Thus, the graph can contain up to $O(mn)$ diversionary edges, yet the optimality and uniqueness conditions given by Theorem~\ref{thm: biclique optimality and uniqueness}
are still satisfied.
This result is the best possible up to constants for the same reasons explained at the end of Section~\ref{sec:maxcliqueadver}.

\subsection{The Random Case}
\label{sec:maxbicliquerand}

Let $y,z$ be fixed positive scalars.
Let $U, V$ be two disjoint vertex sets with $|V| = N$ and
$|U| = \lceil yN\rceil $. Consider 
$U^* \subseteq U$ and $V^* \subseteq V$ such that 
$|V^*| = n$ and
$|U^*| =  m=\lceil zn \rceil$.
Suppose the edges of the bipartite graph $G = ((U,V), E)$ are determined as follows:
\begin{itemize}
    \item[($\beta_1$)]
        For all $(i,j) \in U^*\times V^*$, $(i,j) \in E$.
    \item[($\beta_2$)]
        For each of the remaining potential edges $(i,j) \in U \times V$, we add edge $(i,j)$ to $E$ with probability $p$ (independently).
\end{itemize}
Notice $G$ contains the biclique $(U^*,V^*)$.
As in the maximum clique problem, if $n = \Omega(\sqrt{N})$ and $G$ is constructed as in $(\beta_1)$, $(\beta_2)$
then $U^* \times V^*$ is optimal for the convex problem (\ref{biclique relaxation}).
We have the following theorem.

\begin{theorem}
    \label{main thm: biclique}
There exists $\alpha > 0$ depending on $p$, $y$, $z$ such that for each bipartite graph $G$ constructed via $(\beta_1)$, $(\beta_2)$ with $n \ge \alpha \sqrt{N}$
    the biclique defined by $U^* \times V^*$ is maximum edge biclique of $G$ with probability tending exponentially to $1$ as $N \ra \infty$ and is found as the unique solution to the convex relaxation \eref{biclique relaxation}.
\end{theorem}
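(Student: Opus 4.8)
The plan is to closely parallel the randomized analysis already carried out for the clique problem in Theorem~\ref{main random case}, since the biclique setting is structurally identical once we account for the bipartite (unsymmetric) geometry. By Theorem~\ref{thm: biclique optimality and uniqueness}, it suffices to produce $W,\lambda,\mu$ satisfying the KKT equation \eref{biclique KKT} with $\|W\|<1$ and $\mu>0$; we already have $\mu=1/\sqrt{mn}>0$ and the verifications $W\bar\v=\bz$, $W^T\bar\u=\bz$ from the construction $(\psi_1)$--$(\psi_5)$. The entire content of the proof is therefore the probabilistic bound $\|W\|<1$ with high probability, together with checking $p_i<n$ for all $i\notin U^*$ and $q_j<m$ for all $j\notin V^*$ so that the construction is well-defined. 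I would fix the free parameter as $\gamma=-p/(1-p)$, exactly as in the clique case.

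The key steps, in order, are as follows. First I would decompose $W=W_1+W_2+W_3+W_4+W_5$ in analogy to the clique proof. The dominant piece $W_1$ is an $m\times n$ (suitably placed) matrix whose entries are i.i.d.\ draws from the scaled distribution taking value $1/\sqrt{mn}$ with probability $p$ and $-p/((1-p)\sqrt{mn})$ with probability $1-p$; since this is an unsymmetric rectangular random matrix, I would invoke Theorem~\ref{geman tail bound} (Geman) rather than F\"uredi--Koml\'os, using $y=z$ to match the aspect ratio $m=\lceil zn\rceil$, to obtain $\|W_1\|\le c\sqrt{N}/\sqrt{mn}=O(\sqrt{N}/n)$ with exponentially high probability. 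The small correction matrices $W_2$ (fixing the $U^*\times V^*$ block) and $W_3$ (fixing any diagonal-type overcounting) are bounded by elementary Geman/Frobenius estimates of order $o(1)$ when $n=\Omega(\sqrt N)$. Finally $W_4$ and $W_5$ are precisely the corrections in cases $(\psi_4)$ and $(\psi_5)$, which have exactly the form $(A-\tilde A)/\sqrt{mn}$ studied in Theorem~\ref{thm:AtildeA}; here one applies that theorem with its $n$ and $N$ replaced by the appropriate row/column counts ($m$ and $N-n$ for one block, $n$ and $M-m$ for the other), concluding $\|W_4\|_F^2,\|W_5\|_F^2 = O(N/(mn))$ with probability tending exponentially to $1$. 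Summing the five bounds via the triangle inequality shows $\|W\|$ is at most an arbitrarily small constant once $n\ge\alpha\sqrt N$ for a suitable $\alpha(p,y,z)$, yielding $\|W\|<1$.

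The feasibility-of-construction requirements $p_i<n$ and $q_j<m$ follow from the same Chernoff/union-bound argument used implicitly in the clique case: each $p_i$ is a sum of at most $n$ Bernoulli$(p)$ trials with $p<1$, so $P(p_i=n)=p^n$ is exponentially small, and the union bound over the $O(N)$ relevant vertices still leaves probability tending to $1$; the analogous statement holds for the $q_j$ with $m=\Theta(n)$. These are exactly the events denoted ``Condition F'' in the clique proof.

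The main obstacle I anticipate is purely bookkeeping: keeping the two distinct aspect ratios straight. Because $M=\lceil yN\rceil$ and $m=\lceil zn\rceil$ are independent scalings, the two off-diagonal correction blocks $W_4$ and $W_5$ are governed by \emph{different} instances of Theorem~\ref{thm:AtildeA} (one rectangular block is $m\times(N-n)$, the other $n\times(M-m)$), and one must verify that in \emph{each} the effective row dimension grows like $\Omega(\sqrt{N})$ so that every term of \eref{eq:bigprob} gives an exponentially small failure probability. Provided $n=\Omega(\sqrt N)$ forces both $m$ and $n$ to grow as $\Omega(\sqrt N)$ (which it does, since $m=\Theta(n)$ and $M,N=\Theta(N)$), each application is legitimate and all the bounds of order $\sqrt N/n=O(1/\alpha)$ can be made simultaneously smaller than $1/10$, say, by enlarging $\alpha$. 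The symmetry $W_4=W_5^T$ noted in the clique proof has no exact analogue here because the graph is genuinely unsymmetric, so unlike the clique case I cannot reuse a single estimate for both corrections; I must bound them separately, which is the one genuinely new (if routine) piece of work relative to Theorem~\ref{main random case}.
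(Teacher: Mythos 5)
Your proposal follows essentially the same route as the paper's proof: the same construction $(\psi_1)$--$(\psi_5)$ with $\gamma=-p/(1-p)$, the same splitting of $W$ into an i.i.d.\ bulk term bounded by Geman's theorem (Theorem~\ref{geman tail bound}), a correction on the planted block, and the $(\psi_4)$/$(\psi_5)$ corrections handled by two separate applications of Theorem~\ref{thm:AtildeA} to the rectangular blocks, plus the Chernoff/union-bound check that $p_i<n$ and $q_j<m$. The only cosmetic discrepancies are that the bulk term $W_1$ must be the full $\lceil yN\rceil\times N$ matrix rather than ``$m\times n$'' for your stated bound $O(\sqrt{N}/\sqrt{mn})$ to be the right one, and your extra ``diagonal'' correction term is vacuous since the bipartite formulation has no analogue of case $(\omega_3)$.
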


Let $W$ be constructed as in $(\psi_1)$--$(\psi_5)$ with $\gamma = - p/(1-p)$.
Then $X^* = \bar{\u}\bar{\v}^T$ is the unique optimal solution of $(\ref{biclique relaxation})$
if
$$
    \|W\| < 1, \;\;\; q_j < \lceil z n\rceil  \; \forall\, j\in V-V^*, \;\; \mbox{and } p_j < n \; \forall \; j \in U-U^*.
$$
To prove that $\|W\| < 1$ with high probability as $N \ra \infty$ in the case that $n = \Omega(\sqrt{N})$, we write
$$
    W = W_1 + W_2 + W_3 + W_4
$$
where each of the summands is defined as follows.
We first define $W_1$.  
If $(i,j) \in U^* \times V^*$, then we set $W_1(i,j) = 1/\sqrt{mn}$
with probability $p$ and equal to $\gamma/\sqrt{mn}$ with probability $(1-p)$.
For $(i,j) \in (U\times V) - (U^* \times V^*)$, we set $W_1(i,j) = 1/\sqrt{mn}$
if $(i,j) \in E$ and set $W_1(i,j) = \gamma/\sqrt{mn}$ otherwise.
In order to bound $\|W_1\|$, we will use the following
Theorem~\ref{geman tail bound} to conclude that
$\Vert W_1\Vert \le \alpha \sqrt{N}/\sqrt{mn}$.  Since $\sqrt{mn}$ 
equals $\sqrt{\lceil zn\rceil n}$ and hence
is proportional
to $n$, we see that $\Vert W_1\Vert \le {\rm const}$ with probability
exponentially close to 1 provided $n=\Omega(\sqrt{N})$.

Next, set $W_2$ to be the correction matrix for $W_1$ for $U^* \times V^*$, that is,
$$
    W_2(i,j) = \left\{  \begin{array}{ll}
                    -W_1(i,j) & \mbox{if } (i,j) \in U^* \times V^* \\
                    0       &\mbox{otherwise,}
                \end{array}
            \right.
$$
Again, by Theorem~\ref{geman tail bound} we conclude that
$$
    \|W_2\| \le \alpha \frac{1}{\sqrt{n}}
$$
with probability at least $1- c_1' \exp(-c_2'n^{c_3'})$ for some $c_1', c_2', c_3' > 0$.

It remains to derive bounds for $\|W_3\|$ and $\|W_4\|$. Notice that the construction of
$W(U^*, V-V^*)$ and $W(U- U^*, V^*)$ is identical to that in Case~$(\omega_5)$ for the maximum clique problem.
Thus, we can again apply Theorem~\ref{thm:AtildeA}, first to $W_3$ (in which case
$(n,N)$ in the theorem stand for $(\lceil zn\rceil ,N-n)$) and second to
$W_4^T$ (in which case $(n,N)$ in the theorem stand for $(n, \lceil yN\rceil -
\lceil zn\rceil)$ to conclude that $\Vert W_3\Vert$ and $\Vert W_4\Vert$ are
both strictly bounded above by constants provided $n=\Omega(\sqrt{N})$ with probability tending
to $1$ exponentially fast. Moreover, as before, Condition F is satisfied in this case and thus $q_j < \lceil zn \rceil$ for all $j \in V-V^*$
and $p_j < n$ for all $j \in U - U^*$ as required.
\qed

\section{Conclusions}

We have shown that the maximum clique and maximum biclique problems
can be solved in polynomial time using nuclear norm minimization, a
technique recently proposed in the compressive sensing literature,
provided that the input graph consists of a single clique or biclique plus
diversionary edges.  The spectral technique used
by Alon et al.\ \cite{Alon} for the planted clique problem
has been extended
to other problems; see, e.g., 
McSherry \cite{959929}.  It would be interesting to extend the
nuclear norm approach to other NP-hard problems as well.

\section{Acknowledgements}

We received helpful comments from our colleagues C.~Swamy and
D.~Chakrabarty.


\bibliographystyle{plain}
\bibliography{rankmin}


\end{document}